\newcommand\blfootnote[1]{%
  \begingroup
  \renewcommand\thefootnote{}\footnote{#1}%
  \addtocounter{footnote}{-1}%
  \endgroup
}
\newtheorem{theorem}{Theorem}
\newtheorem{corollary}{Corollary}
\newenvironment{varsubequations}[1]
 {%
  \addtocounter{equation}{-1}%
  \begin{subequations}
  \def\@currentlabel{#1}%
 }
 {%
  \end{subequations}\ignorespacesafterend
 }
\newcommand{\Expect}{{\rm I\kern-.3em E}}
\begin{document}

\title{Adapt or Wait:\\Quality Adaptation for Cache-aided Channels}

\author{Eleftherios Lampiris, Giuseppe Caire
}

\maketitle

\begin{abstract}\blfootnote{Eleftherios and Giuseppe are with the Electrical Engineering and Computer Science Department, Technische Universit\"at Berlin, 10587 Berlin, Germany -- \{eleftherios.lampiris, caire\}@tu-berlin.de.
The work is supported by the European Research Council under the ERC grant agreement N. 789190 (project CARENET).}
This work focuses on {quality adaptation as a means to counter the effects of channel degradation in wireless, cache-aided channels.}
We design a delivery scheme which combines {coded caching}, superposition coding, and {scalable source coding}, while keeping the caching scheme oblivious to channel qualities.
By properly adjusting the quality at the degraded users we are able to satisfy all demands in a time-efficient manner.
In addition, superposition coding allows us to {serve high-rate} users with {high content quality} without subjecting them to a delay penalty {caused by users with lower rate channels}.
{
We design a communication framework that covers all possible channel rate and quality configurations and we further provide algorithms that can optimise the served quality.}
An interesting outcome of this work is that a modest quality reduction at the degraded users can counter the effects of significant channel degradation.
For example, in a $100$-user {system} with normalized cache size $1/10$ at each user, if $10$ users experience channel degradation of $60\%$ compared to the rate of the non-degraded users, we show that our transmission strategy leads to a $\thicksim85\%$ quality at the degraded users and perfect quality at the non-degraded users.

\end{abstract}

\section{Introduction}

The seminal work by Maddah-Ali and Niesen \cite{maddah2014fundamental} delved into the fundamental performance aspects of a single-link, bottleneck scenario in which a server is connected to $K$ cache-aided users. In this setting, the server has access to a library of $N$ files, and each user can store the equivalent of $M$ files, denoted as a fraction $\gamma \triangleq \frac{M}{N}$ of the library, while each user synchronously requests a single file from this library.

The placement and delivery algorithms introduced in \cite{maddah2014fundamental} were designed to enable each transmission to serve $K\gamma+1$ users simultaneously, even when users requested different files.
The delivery time, as formulated in \cite{maddah2014fundamental} and normalized with respect to file-size and link-rate, takes the following form:
\begin{equation}\label{eqMNdelay}
    T_{\text{MAN}} = \frac{ K(1-\gamma)}{1+K\gamma}.
\end{equation}

A significant contribution to this line of work is the observation that the performance, as described by \eqref{eqMNdelay}, is precisely optimal under uncoded placement, as proven in \cite{wanExactOptimalityTransIT2020,yuExactUncodedTransIT2018}.
Furthermore, even for arbitrary placement schemes, the performance is within a multiplicative factor of $2.01$ from the optimal, as proved in \cite{yuFactorOf2TransIT2019}.

\subsection*{Extended applicability of Coded Caching}

Since the seminal work in \cite{maddah2014fundamental}, the main premise behind coded caching i.e., caching-enabled multicast transmissions, has been extended and modified to cater to other settings and scenarios such as, decentralized caching \cite{maddahDecentralizedToN2015,amiriDecentralizedUnevenCachesTransComm2017}, multiple transmitters \cite{ngoScalableTransWireless2018,ShariatpanahiPhysicalLayer2019TransIT,naderializadehFundamentalTransIT,lampirisCachelessITW}, file popularity \cite{augmentingSerbetciWiOpt2020,7865913,ji2017order,ji2014order,zhang2017coded}, device-to-device communications \cite{ibrahim2020device,ibrahim2018device,DBLP:journals/corr/abs-1905-05446}, asynchronous demands \cite{ghasemiAsynchronousToN2020,audienceRetentionYangTcomm2019,lampiris2021coded}, wireless channels \cite{destounisAlphaFairTIT2020,ZFE:15,8695087}.
Furthermore, due to the potential of coded caching to reduce the communication cost of content-related information through cheap pre-fetching of content bits at the users, a great amount of effort has been placed in adapting coded caching for use in wireless environments.
Such efforts have revealed how the coded caching technique can be adapted to reduce channel feedback \cite{zhang2017fundamental,zhang2015coded,lampiris2021resolving}, boost multiple transmitter gains \cite{lampirisSubpacketizationJSAC,zhangXinterferenceManagementICC2019,serbetciTransmitterSidePopularityToN2023,lampirisSubpacketizationCsitSPAWC}, serve cache-aided and non-cache-aided users simultaneously \cite{lampirisCachelessTIT2020}.

\subsection*{The worst-user effect}

A limiting factor, though, in applying coded caching in wireless channels has been the worst-user effect.
In essence, due to the non-uniform rates among the users of a wireless channel, coded caching gains can be limited by the channel of the worst user.
For example, as shown in \cite{lampirisWorstUserIZS2019}, even a single low-capacity user can double the delay of the system.

Many works have sought to design algorithms which can ameliorate this effect by proposing tools and methods such as superposition coding \cite{zhangTopologicalISIT2017}, multiple antennas \cite{ngoScalableTransWireless2018,tolliMulticast2018ISIT,ShariatpanahiPhysicalLayer2019TransIT} and power adaption \cite{amiri2018caching}, to name a few.
Recent works \cite{lampirisWorstUserIZS2019,joudeh2021fundamental} {have derived} the fundamental limits of the single antenna degraded Broadcast Channel with caching, and have designed algorithms that achieve these fundamental limits within a multiplicative factor of $2.01$.

It remains, though, a hard fact that some channel rate configurations, {such as a small portion of the users experiencing very low rates,} could impose a delay penalty which cannot be recovered through the previously mentioned techniques.
{This motivates our work here to design algorithms that are able to adapt the file quality served at the users in order to reduce the aforementioned delay penalty.}
While quality adaptation mechanisms are well-known and widely implemented in conventional unicast transmission, such as DASH - Dynamic Adaptive Streaming over HTTP~\cite{stockhammer2011dynamic,kuaSurveyRateAdaptationDASHcommSurveys2017}, the challenge here is to create an adaptive quality algorithm that works in conjunction with the multicast transmissions created by coded caching.
At the end, our {primary} goal in this work is to characterise the delivery time of a cache-aided system where every user can -- potentially -- have a different channel rate and may receive different file quality.

\subsection*{State-of-Art}

The problem of varying quality requirements at the users has been {previously} investigated in the context of coded caching, and can be divided into two lines of work.
The first line of work \cite{yang2018coded,bayat2018spatially,hassanzadeh2015distortion,ibrahim2018coded,parrinelloStatisticalQoS2019ISIT} has focused on settings where users share the same unit-capacity link, but have different quality requirements and may potentially have different cache sizes.
In \cite{yang2018coded}, each user has a different cache storage and different quality requirement and the paper's objective is the design of a placement and delivery algorithm to minimize the delivery delay.
Similarly, in \cite{ibrahim2018coded} users are also equipped with different cache sizes, but compared to \cite{yang2018coded} the authors create an optimization problem to determine both the cache contents and the size of each cache.
Further, in \cite{hassanzadeh2015distortion} the authors consider a setting where users have different cache-sizes while the files are requested {according to} a popularity distribution.
The objective is to design the placement and delivery algorithms in such a manner that maximizes the quality received at each user.
Further, in \cite{parrinelloStatisticalQoS2019ISIT} the authors assume statistical knowledge of the requested qualities.
This allows them to design the user caches based on this statistical knowledge. 
Specifically, the paper presents a linear programming problem based on Index Coding converses \cite{yossefIndexCodingTransIT2011}, through which the authors are able to design the optimal solution achieving the minimisation of the delivery time under the worst-case request pattern.

In contrast, the work \cite{bayat2018spatially} considers a setting where users are equipped with multiple antennas and are served by multiple edge nodes.
Edge nodes communicate content at the same time and users need to recover the fundamental quality layer or both the fundamental and the refinement layer by decoding all the transmissions from a select amount of edge nodes.

{The second line of work \cite{amiri2018capacity,salehi2020coded} considers a model closer to the one treated in this paper,} where users experience degraded channels and the quality is adapted at each user with the aim to limit the effect of channel unevenness.
Work \cite{amiri2018capacity} considers a setting where users have different rates and different cache sizes.
The authors optimise the cache assignment and the cached contents at the users and employ dirty paper coding and superposition coding techniques in order to deliver content to the users. 
Similarly, the work in \cite{salehi2020coded} explores the problem of cache-aided communications under non-symmetric channel rates at the users.
{The proposed approach} divides each file into quality layers using a Multiple Descriptor Code and designs an optimisation problem which decides which user subsets should be served such as to satisfy various minimum Quality of Experience criteria, while at the same time to adhere to a delivery delay constraint.

\subsection*{Results overview \& Contributions}

The focus of this work is {the analysis of the efficient} delivery of variable quality content at cache-aided users in degraded Broadcast Channels.
Specifically, {our work here covers the following three objectives:}
\begin{enumerate}
	\item {Create a caching and delivery framework for the efficient communication of an arbitrary file quality to each user for any arbitrary set of channel strength variables.}
	\item Characterise the achievable delivery time under {an arbitrary set of user channel strengths} and an {arbitrary file quality delivered at each user} and,
	\item Optimise the file quality {delivered} at each user {under any target time constraint}. 
\end{enumerate}

The first objective is achieved by combining three techniques, i) superposition coding, ii) cache-aided multicasting and, iii) {scalable video coding}\footnote{{Scalable video coding \cite{schwarz2007overview}, or more general scalable source coding, indicates a class of lossy source coding schemes that operate according to the principle of successive refinability.
The original source signal (e.g., a video sequence) is encoded in such a way that the quality of the reproduction at the decoder increases with the length of the received bitstream.
In this way, if the transmission is cut short and only an initial segment of normalised size $Q < 1$ of the encoded bitstream is received, the receiver can reproduce the source with quality which is an increasing function of $Q$, and where maximum quality is achieved for $Q = 1$, i.e. when the whole bitstream is received.}}.
{The placement and delivery algorithms are presented in Sec.~\ref{secPlacementDelivery}.}

{The second objective is satisfied} through our two main theorems. 
Th.~\ref{theGeneralCase} presents the delivery time for the {general setting where each user may have arbitrary channel strength $\alpha_{k}\in(0,1]$ and is served with file quality $Q_{k}\in(0,1]$.}
Subsequently, Th.~\ref{theMain} presents {a special case of Th.~\ref{theGeneralCase} allowing us to get a clearer insight into how quality adaptation can affect the delivery time. Specifically, this setting considers two groups of}  users, {where} the first group has perfect channel rate and thus perfect quality, while the second group has reduced channel rate and receives its files with reduced quality.

The final {objective is to optimise the served file quality at each user such that the system achieves the desired delivery time}.
Observing the two theorems we see that while in the two-type case the file quality served at the degraded users can be easily adjusted in order to achieve the desired delivery time,
{ in contrast, for the general case this quality can be allocated in many different ways}.
Thus, in this step we design quality allocation algorithms {targeting the maximisation of different metrics of interest}.
{The algorithms that we consider are presented and analysed in Sec.~\ref{secAlgosQualityAllocation} and in short are:}
\begin{itemize}
	\item \textit{Proportional fairness optimisation}, where each user's quality is proportional to the user's channel degradation,
	\item \textit{Max Min optimisation}, where the goal is to {optimise the} minimum quality guarantee.
	\item \textit{Sum quality maximisation}, where the quality allocation at each user is designed such that to maximise the overall quality.
\end{itemize}

In Fig.~\ref{QualityAllocationStrategiesComparison} we compare the outputs of these three algorithms.
An interesting observation regarding the quality allocation algorithms is that it is more ``economical'', {in the context of increasing the {overall served} quality,} to {allocate a significant portion of the file for the base layer.}
This is a consequence of the multicast nature of coded caching, where each message serves multiple users, {which means that} the base quality ``lifts'' every user, while {any other quality layer $n$ is relevant to $K-n+1$ users but still needs to be communicated via (multicast) messages which involve any subset of all $K$ users.}

An additional reason why cache-aided multicasting favours {the} increasing of the quality of the lower-rate users is tied to the results of the non-adaptive quality setting in \cite{lampirisWorstUserIZS2019} where the authors show that {in most cases} the user that forms the bottleneck is not the worst-rate user, but most likely some user in the middle and that, even if the channels of all users before the bottleneck user increased, the system performance would still be the same.
Drawing the parallel with our problem here, we see that we can significantly increase, with respect to the channel strength, the file quality of every user {with lower rate than} the bottleneck user, without impacting the delivery time.

A further observation {regarding the results of the quality optimisation algorithms} has to {do} with the increased file quality at the users {with} high rate channels.
{Due to the channels of those users being better than the bottleneck user's channel we are able to push further information and increase their file quality by taking advantage of the ``topological holes'' idea (cf.~\cite{joudeh2021fundamental}) where the authors showed that non-cacheable traffic can be transmitted along with cacheable traffic in degraded broadcast channels.
Hence, we can increase the quality at the high-rate users ``for free'' i.e., without affecting neither the delivery time nor the quality at the low-rate users.
}

\begin{figure}[th!]
  \centering
\includegraphics[width=0.9\columnwidth]{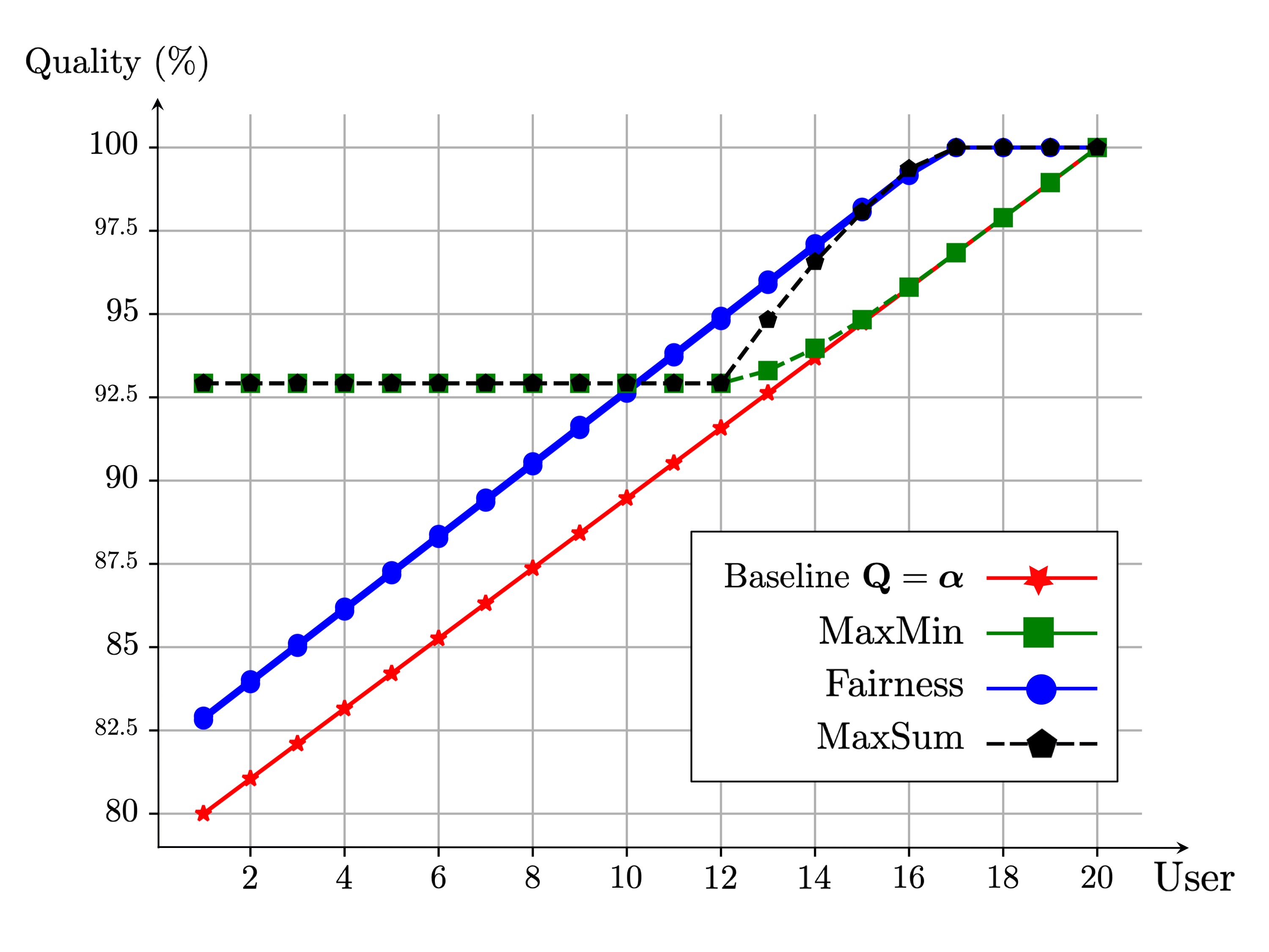}
\caption{Comparison of the quality allocation using the three afore-mentioned algorithms.
\text{Baseline} corresponding to $\mathbf{Q} \!=\! \boldsymbol\alpha$ (red solid line with stars).
\textit{Fairness optimisation} (blue solid with circles).
\textit{MaxMin optimisation} (green solid with squares).
\textit{MaxSum quality} (black dashed with pentagons).
Setting: $K\!=\!20$ users, $\gamma=\frac{3}{20}$, $\alpha_{k} = 0.8 + 0.2\frac{k-1}{K-1}$, $T_{\text{tar}}= T_{\text{MAN}}$.}
\label{QualityAllocationStrategiesComparison}
\end{figure}

\subsection*{Notation}
Symbols $\mathbb{N}, \mathbb{C}$ denote the sets of natural and complex numbers, respectively. For $n,k\in\mathbb{N},~n\geq k$, we denote the binomial coefficient with $\binom{n}{k}$, while $[k]\triangleq \{1,2, ..., k\}$.
We use $| \cdot |$ to denote the cardinality of a set. 
Throughout the paper we use the following convention for the binomial coefficient
\begin{align}
	\binom{n}{k} \triangleq 
	\begin{cases}
		\frac{ n!}{(n-k)! k!},	& n\ge k\\
		0, & n< k.
	\end{cases}
\end{align}
Additionally, throughout this work we use the Hockey-stick identity and Pascal's triangle which, respectively, take the form
\begin{align}
	\sum_{k=r}^{n}\binom{k}{m}& = \binom{n+1}{m+1} - \binom{r}{m+1}\\
	\binom{r+1}{n+1} &= \binom{r}{n} + \binom{r}{n+1}.
\end{align}

\section{Setting and System Model}

We consider the $K$-user wireless Single Input Single Output Broadcast Channel, where the transmitter has access to a library of $N$ files $\{W^n\}_{n=1}^{N}$, each of size $F$ bits, while each of the $K$ receivers is equipped with a cache of size equal to fraction $\gamma \in[0,1]$ of the total library size. Communication is divided into two distinct phases, namely the pre-fetching and the delivery phases.
{During the pre-fetching phase, which typically takes place during off-peak hours, e.g. when the devices are connected to a WiFi router}, the caches of the users are pre-filled with content from the library without any knowledge of future requests or channel capacities.
During the delivery phase, each user $k$ requests\footnote{We are interested in the worse-case delivery time and thus, we assume that each user asks for a different file.} a single file $W^{d_k}$, after which the {base station} delivers the requested content. 

A received signal {at time $t$ at user $k\in[K]$} takes the form 
\begin{align}
    y_{k}[t]= h_{k} \tilde{x}[t] +z_{k},
\end{align}
{where, $\tilde{x}[t]$ corresponds to the input signal, satisfying an average power constraint
$\frac{1}{T}\sum_{t=1}^{T} x[t]\leq P, $
$y_{k}[t]$ is the signal received at user $k$,
$h_{k}\in \mathbb{C}$ is the channel coefficient of user $k$, and $z_{k}\thicksim \mathbb{C}\mathcal{N}(0,1)$ represents the Gaussian noise at user $k$.
Under the Generalised Degrees of Freedom (GDoF) framework \cite{geng2015optimality,jafarGDoFTransIT2010,davoodiAlignedImageSetsTransIT2016,gholamiGeneralized2017TransIT}, the received signal can be re-written in its more GDoF-friendly form as follows
\begin{align}
	y_{k}[t] = \sqrt{P^{\alpha_{k}}} e^{ j \theta_{k}} x[t] + z_{k} 
\end{align}
where here $x[t] \triangleq \frac{\tilde{x}[t]}{\sqrt{P}}$ is the power-normalised transmitted signal, while $\sqrt{P^{\alpha_{k}}}$ and  $e^{ j \theta_{k}}$ are the magnitude and phase of the channel coefficient, respectively.
Further, exponent $\alpha_{k}$ is defined as the \textit{channel strength} and given by the following
\begin{align}
	\alpha_{k} \triangleq \frac{ \log( \max\{ 1, | h_{k}|^{2} \})}{\log (P)}.
\end{align}

}

{The channel strength variables in practise depend on the strength of the received signal, which in turn is a function of the pathloss which is location dependent.
We assume that for the whole duration of the delivery time the path loss of each user is known (therefore the variables $\alpha_{k}$ are known) since these depend on slowly varying user motion across the cell.}
Without loss of generality, $\alpha_{k}=1$ corresponds to the highest possible channel rate.
We assume an arbitrary set of such normalised rates $\boldsymbol{\alpha}\triangleq\{\alpha_k\}_{k=1}^{K}$ and we further assume, without loss of generality, those to be ordered in ascending order ($\alpha_{k}\le \alpha_{k+1}$).

We assume that each file $W^{n}$ is a source (e.g., video file), encoded using {(lossy)} scalable {source} coding, {and} we make the assumption that the bitstream produced by the scalable encoder can be ``cut'' at any points $q_{1}\!\cdot\! F$,..., $q_{K}\!\cdot\! F$, $q_{k}\in[0,1]$ $\forall k\in[K]$, such that the first fraction $W^{n, q_{1}}$ can be decoded at some quality that depends on $q_{1}$.
Concatenating further continuous fractions $W^{n, q_{2}}, ... ,  W^{n, q_{m}} $ would allow the decoding of the file with quality that depends on $q_{1}+...+q_{m}$, while concatenating all $K$ fractions leads to the decoding of the file with maximum quality.
Hence, user $k\!\in\![K]$ achieves quality $Q_{k}$ by decoding all layers $q_{1},\!\dots\!, q_{k}$.
Since a user with a higher capacity channel is also able to decode the signals of lower capacity users, we assume $q_{k}\geq0$, meaning that the quality at user $k$ is lower or equal than the quality of user $k+1$, $\forall k\in[K-1]$.
For simplicity, the quality function that we use in this work is linear in the size of the delivered message. 

With some slight abuse of notation we express the quality of a file as a function of the user's rate e.g., $Q_{k}\! =\! \alpha_{k}$.
In reality, since both values are normalised, this relationship is better expressed as: \textit{If the channel strength is $\alpha_{k}\! =\! 0.6$ then the quality is $Q_{k}\! = \!0.6$.}
As we show in Cor.~\ref{corAlphaGuarantee} this comparison is important, since {choosing $Q_{k}$ such that} $Q_{k} =\alpha_{k}$ would always produce a delivery time that is less or equal to {that} of the non-degraded system.
Hence, the assignment $Q_{k} = \alpha_{k}$ can be considered the ``baseline'', and at the same time the ratio $Q_{k}/\alpha_{k}$ can give us a metric of how better the quality at each user is compared to its channel degradation.

\subsection*{A note on the {system metric}}

{

In \cite{maddah2014fundamental} the considered physical model is a symmetric channel between the transmitter and the users, hence the delivery time required to communicate all $K$ files takes the form
\begin{align}\label{eqMNdelayPhysical}
	\mathcal{T}_{\text{MAN}} = \frac{1}{\log(1+P)} \frac{K(1-\gamma)}{K\gamma+1}
\end{align}
with $P$ being the transmitted power.
The above metric represents the number of channel uses required to deliver one bit of content for each user as the file size approaches infinity.

The physical model we consider in this work, though, is based on each user experiencing (potentially) different SNR, making an exact delay difficult to analyse (see also the discussion in \cite{joudeh2021fundamental,amiri2018caching,amiri2018capacity,salman2019exact}).
For this reason, we analyse the system's performance in the more tractable GDoF regime and we use its reciprocal to be the delay metric.
This metric is also known as the Generalised Normalised Delivery Time (GNDT) \cite{piovanoRobustISIT2018,zhangTopologicalISIT2017,piovanoGDoFTIT2019,lampiris2017cache} and combines the GDoF with the amount of information required to be communicated. 

The GNDT is measured in time slots, where $1$ time slot corresponds to the delay required to deliver, in the absence of caches, a single file with full quality to the strongest user as $P$ approaches infinity, which amounts to $\log(1+P)$.
Hence, if the overall delay to communicate all $K$ files, each with quality $Q_{k}$ would be $\mathcal{T}(P, \mathbf{Q})$, then the GNDT of the system would take the form
\begin{align}
	T = \lim_{P\to \infty} \mathcal{T}(P, \mathbf{Q})\log(1+P).
\end{align}}

\section{Results}

{The results presented in this section correspond to a system with $K$ users, where each user has access to a cache of normalised size $\gamma$ and requests a single file from a library of $N$ files.
Each user experiences a channel strength equal to $\alpha_{k}\in(0,1]$, for which without loss of generality holds the relationship $\alpha_{k}\leq \alpha_{k+1}$, $\forall k\in[K-1]$.
The quality communicated to user $k$ is $Q_{k}\in(0,1]$, while the relationship $Q_{k}\leq Q_{k+1}$ holds for any $k\in[K-1]$.}

\begin{theorem} \textbf{Adaptive quality for users with different rates:}\label{theGeneralCase}
{
The achievable delivery time of the aforementioned system takes the form}
\begin{align}\label{eqTimeMultiRates}
	T(K,\gamma,\mathbf{Q}, \boldsymbol\alpha) \!=\! \max_{w\in [K]} \left\{  \frac{ Q_{w}\binom{K -1 }{K\gamma}+ \ldots+ Q_{1}\binom{K -w }{K\gamma}
 }{ \alpha_{w} \cdot \binom{K}{K\gamma}} \right\},
\end{align}
where the numerator represents the amount of information needed to be communicated to users of set $[w]$.

\end{theorem}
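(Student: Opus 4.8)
The plan is to prove the statement by exhibiting an explicit delivery scheme and showing that its delivery time equals the claimed maximum; since the statement is an achievability result, no converse is attempted. The placement is the oblivious Maddah--Ali--Niesen scheme over all $K$ users, so each file $W^{n}$ is split into $\binom{K}{K\gamma}$ subfiles $W^{n}_{\tau}$, $|\tau|=K\gamma$, of normalised size $1/\binom{K}{K\gamma}$, and user $k$ caches every $W^{n}_{\tau}$ with $k\in\tau$. Using the successively-refinable structure I model ``user $k$ wants quality $Q_{k}$'' as ``user $k$ wants the initial fraction $[0,Q_{k}]$ of each of its missing subfiles''. First I would fix a request subset $S$ with $|S|=K\gamma+1$, write $S=\{s_{1}<\dots<s_{K\gamma+1}\}$, and split its coded transmission into quality sub-layers: the slice $(Q_{s_{j-1}},Q_{s_{j}}]$ (with $Q_{s_{0}}=0$) of the subfiles is served to the group $G_{j}=\{s_{j},\dots,s_{K\gamma+1}\}$ via the single XOR $\bigoplus_{k\in G_{j}}[W^{d_{k}}_{S\setminus\{k\}}]^{(Q_{s_{j-1}},Q_{s_{j}}]}$. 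Each member of $G_{j}$ cancels the foreign terms from its cache (it lies in $S\setminus\{k\}$ for every other $k\in G_{j}$), so this sub-layer has weakest intended recipient $s_{j}$ and normalised size $Q_{s_{j}}-Q_{s_{j-1}}$.

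Next I would place these sub-layers in the degraded broadcast channel by superposition coding, assigning every sub-layer whose weakest recipient is $s_{j}$ to a power-exponent band contained in $[0,\alpha_{s_{j}}]$; by degradedness such a band is resolvable at the channel-code level by every user $k$ with $\alpha_{k}\ge\alpha_{s_{j}}$, i.e.\ by every $k\ge s_{j}$. Here I would invoke the standard--but worth stating--distinction that decoding and cancelling a codeword needs only sufficient SNR, whereas extracting content from the XOR additionally needs the cached side information; hence a user can cleanly peel a sub-layer it cannot use, which is exactly what lets the successive decoding proceed. It then follows that user $w$ resolves precisely the sub-layers with weakest recipient $\le w$, and, telescoping the slices $(Q_{s_{j-1}},Q_{s_{j}}]$ over the prefix of indices with $s_{j}\le w$, that from subset $S$ it consumes resource $Q_{\max(S\cap[w])}$ (and zero if $S\cap[w]=\emptyset$). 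One also checks that each user $k$ reassembles exactly the fraction $[0,Q_{k}]$ of all its missing subfiles, so all demands are met.

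The counting step is then combinatorial. The number of $(K\gamma+1)$-subsets $S$ with $\max(S\cap[w])=i$, for $i\in[w]$, is found by forcing $i\in S$, forbidding $S$ from meeting $\{i+1,\dots,w\}$, and choosing the remaining $K\gamma$ elements from $[i-1]\cup\{w+1,\dots,K\}$, giving $\binom{K-w-1+i}{K\gamma}$. Summing $Q_{\max(S\cap[w])}$ over all such $S$ therefore yields $N_{w}\triangleq\sum_{i=1}^{w}Q_{i}\binom{K-w-1+i}{K\gamma}$, which after re-indexing is exactly the numerator of the claimed expression, so the total normalised resource that user $w$ must resolve equals $N_{w}/\binom{K}{K\gamma}$. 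The Hockey-stick identity then re-derives the full-quality sanity check $N_{K}=\binom{K}{K\gamma+1}$, recovering $T_{\mathrm{MAN}}$.

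Finally, over a transmission of duration $T$ user $w$ extracts at most resource $\alpha_{w}T$ from its decodable band $[0,\alpha_{w}]$, so feasibility forces $\alpha_{w}T\ge N_{w}/\binom{K}{K\gamma}$ for every $w$, i.e.\ $T\ge\max_{w}N_{w}/(\alpha_{w}\binom{K}{K\gamma})$. Conversely, because the sets of sub-layers that successive users must resolve are nested (if $S\cap[w]\neq\emptyset$ then $S\cap[w{+}1]\neq\emptyset$) and the volumes $N_{w}/\binom{K}{K\gamma}$ are non-decreasing in $w$, I would fill the power--time rectangle greedily from the bottom, placing first the sub-layers with weakest recipient $1$, then those with weakest recipient $2$, and so on; each new increment fits inside $[0,\alpha_{w}]\times[0,T]$ since the free area there, $\alpha_{w}T-N_{w-1}/\binom{K}{K\gamma}$, is at least the increment $\left(N_{w}-N_{w-1}\right)/\binom{K}{K\gamma}$ precisely when $T$ equals the stated maximum. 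This shows the maximum is attained and closes the argument. I expect the main obstacle to be this last simultaneous-feasibility step together with the quality bookkeeping of the second paragraph: one must verify that the unequal-quality sub-layering induces exactly the per-subset volume $Q_{\max(S\cap[w])}$, and that the nesting lets every per-prefix bottleneck be satisfied at the single common time $T$.
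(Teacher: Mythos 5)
Your proposal is correct and its overall architecture coincides with the paper's constructive proof: Maddah--Ali--Niesen placement oblivious to channels and qualities, quality-sliced multicast messages organised by their weakest intended recipient, superposition coding that confines each slice to a power-exponent band decodable by that weakest recipient, and a rate/power matching step showing every band empties at the common time $T=\max_w L_w(\mathbf{Q})/(\alpha_w\binom{K}{K\gamma})$ (the paper makes this last step explicit via the exponents $\pi_n=\frac{L_n}{L_w}\alpha_w$, whereas you phrase it as greedy area-filling of the power--time rectangle; the two are equivalent, and your nesting observation is exactly what makes the paper's choice of $\pi_n$ feasible, i.e.\ $\pi_n\le\alpha_n$). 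Where you genuinely depart from the paper is in the combinatorial accounting: the paper first writes the per-sub-signal load $\ell_n(\mathbf{Q})$ as a three-term sum and then proves $L_k(\mathbf{Q})=Q_k\binom{K-1}{K\gamma}+\dots+Q_1\binom{K-k}{K\gamma}$ by induction using Pascal's rule and the Hockey-stick identity (Appendix C), while you compute the cumulative load directly by classifying the $\binom{K}{K\gamma+1}$ multicast subsets $S$ according to $\max(S\cap[w])=i$, counting $\binom{K-w-1+i}{K\gamma}$ such subsets, and telescoping the quality slices to get a per-subset contribution of $Q_{\max(S\cap[w])}$. Your route is shorter and arguably more transparent, avoiding the induction and the somewhat delicate $\ell_n$ formula entirely; the paper's route has the side benefit of exhibiting $\ell_n$ explicitly, which it reuses for the power allocation and the rate calculation $R_n=\ell_n\alpha_w/L_w$. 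Your explicit remark that a user can channel-decode and peel a sub-signal whose content it cannot exploit (SIC needs only SNR, not side information) is the right justification for the successive decoding and is consistent with the paper's treatment.
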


\begin{proof}
	The proof is constructive and described Sec.~\ref{secPlacementDelivery}.
\end{proof}

\begin{theorem}{\bf{Adaptive quality for wireless channels with two-type users:}} \label{theMain}
{In the special case of the above system, where the low-rate users have channel strength $\alpha \triangleq\alpha_{1}=...=\alpha_{w} $, and quality $Q \triangleq Q_{1}=...=Q_{w}$ while the remaining users have perfect channel strength and receive perfect quality, the achievable delivery time takes the following form}
    \begin{align}\label{eqCompletionTime}
        T(K,\gamma,{\alpha}, Q) \!=\!   \max \left\{ \frac{Q}{\alpha}\cdot   \frac{ \binom{K}{K\gamma+1}\!-\!\binom{K-w}{K\gamma+1}}{\binom{K}{K\gamma}}, \ T_{\text{MAN}} \right\}.
    \end{align}
\end{theorem}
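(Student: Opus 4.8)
The plan is to obtain Theorem~\ref{theMain} as a direct specialisation of Theorem~\ref{theGeneralCase}: I would substitute the two-type profile $\alpha_1=\dots=\alpha_w=\alpha$, $\alpha_{w+1}=\dots=\alpha_K=1$ together with $Q_1=\dots=Q_w=Q$, $Q_{w+1}=\dots=Q_K=1$ into the maximisation \eqref{eqTimeMultiRates} and evaluate it in closed form. The natural move is to split the outer maximisation over $w'\in[K]$ into the \emph{degraded prefix} $w'\le w$, where the bottleneck user has rate $\alpha$, and the \emph{full-rate tail} $w'>w$, where the bottleneck user has $\alpha_{w'}=1$. I would simplify each regime separately and then take the maximum of the two resulting expressions, expecting them to coincide with the two arguments of the maximum in \eqref{eqCompletionTime}.

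For the degraded prefix I would set $Q_j=Q$ and $\alpha_{w'}=\alpha$, so that after re-indexing the binomial arguments the numerator of \eqref{eqTimeMultiRates} becomes $Q\sum_{i=K-w'}^{K-1}\binom{i}{K\gamma}$. Applying the Hockey-stick identity collapses this sum to $\binom{K}{K\gamma+1}-\binom{K-w'}{K\gamma+1}$, giving the term $\frac{Q}{\alpha}\big[\binom{K}{K\gamma+1}-\binom{K-w'}{K\gamma+1}\big]/\binom{K}{K\gamma}$. Since $\binom{K-w'}{K\gamma+1}$ is non-increasing in $w'$, this term is non-decreasing on $\{1,\dots,w\}$, so its maximum over the degraded prefix is attained at $w'=w$ and equals exactly the first argument of the maximum in \eqref{eqCompletionTime}.

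For the full-rate tail I would split the numerator of \eqref{eqTimeMultiRates} into the contribution of the $w$ low-rate users (weight $Q$) and that of the remaining full-rate users (weight $1$), apply the Hockey-stick identity to each block, and recombine the boundary binomials using Pascal's rule; with $\alpha_{w'}=1$ the maximiser is $w'=K$, at which the full-quality contributions reconstitute the complete multicast load $\binom{K}{K\gamma+1}$, so the term reduces to $\binom{K}{K\gamma+1}/\binom{K}{K\gamma}=T_{\text{MAN}}$. Taking the larger of the two regimes would then yield \eqref{eqCompletionTime}. The step I expect to be the main obstacle is precisely this tail evaluation: the Hockey-stick computation leaves a residual of the form $(1-Q)\binom{w}{K\gamma+1}/\binom{K}{K\gamma}$ coming from the quality-reduced low-rate users' share of the shared messages, and I must argue this does not pull the tail bottleneck below the stated $T_{\text{MAN}}$. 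The cleanest way to close this gap is to invoke the constructive scheme of Sec.~\ref{secPlacementDelivery}: since the full-rate users demand the entire set of MAN multicast messages, their delivery is governed by a complete MAN round of duration $T_{\text{MAN}}$, independently of $Q$ (and the residual indeed vanishes whenever the number of degraded users does not exceed the coded-caching gain, $w\le K\gamma$). Confirming the monotonicity that localises each regime's maximum at $w'=w$ and $w'=K$ is then the routine bookkeeping that makes the two-term maximum of \eqref{eqCompletionTime} fall out.
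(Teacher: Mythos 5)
Your proposal follows essentially the same route as the paper's (one-line) proof: Theorem~\ref{theMain} is obtained by substituting the two-type profile into Theorem~\ref{theGeneralCase} and collapsing the numerator with the Hockey-stick identity, which is exactly what the paper does. Your attention to the residual $(1-Q)\binom{w}{K\gamma+1}/\binom{K}{K\gamma}$ in the full-rate tail is a detail the paper silently omits, and your resolution is sound --- the residual can only push the tail term \emph{below} $T_{\text{MAN}}$ (so the stated time remains achievable) and it vanishes whenever $w\le K\gamma$, as in the paper's own worked example.
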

\begin{proof}
{
	The achievability is based on the design of Th.~\ref{theGeneralCase}, substituting $Q_{k\in[w]}$ with $Q$ and $Q_{k\notin [w] }=1$.
	}
\end{proof}

\begin{figure}[th!]
  \centering
\includegraphics[width=0.9\columnwidth]{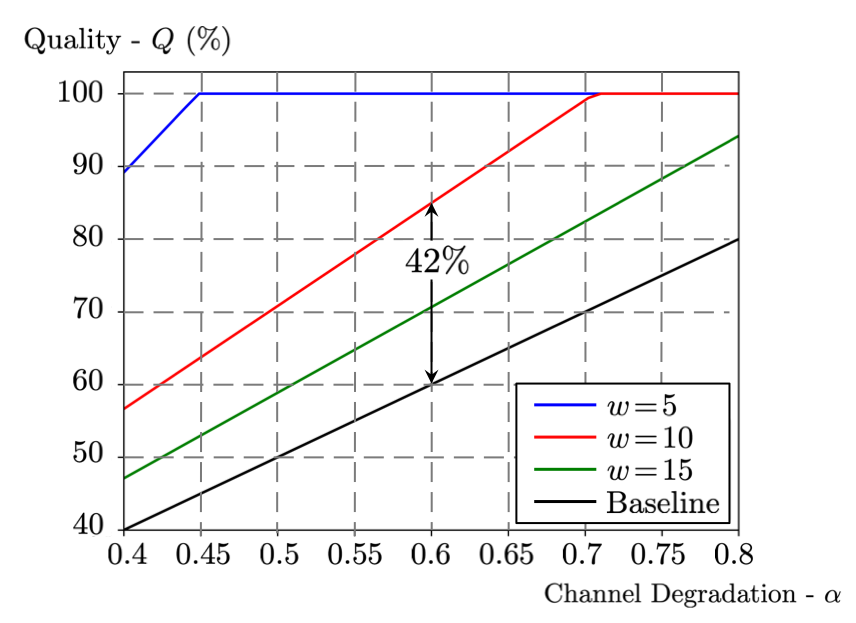}
\caption{File quality at the degraded users as a function of the channel degradation required to achieve delivery time $T_{\text{MAN}}$ for the two-type case.
Setting: $K=100$-user channel with normalized cache $\gamma=\frac{1}{10}$ at each receiver.
The baseline case corresponds to quality $Q=\alpha$.}
\label{figdegradationVquality}
\end{figure}

\begin{corollary}\label{eqCorollaryMaxQuality}
Let us consider the degraded Broadcast Channel of Th.~\ref{theMain}.
Denoting with $Q^{\star}$ the highest achievable quality that can be communicated to each user of set $[w]$ such that the delivery time equals $T_{\text{MAN}}$ of the non-degraded channel, then
\begin{equation}\label{eqQuality}
	Q^{\star}= \min \left\{  \frac{\alpha\binom{K}{K\gamma+1}}{		\binom{K}{K\gamma+1} -  \binom{K-w}{K\gamma+1}}, 1	\right\}.
\end{equation}
	
\end{corollary}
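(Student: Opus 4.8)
The plan is to read off the claim directly from the delivery-time expression of Th.~\ref{theMain}. The completion time in \eqref{eqCompletionTime} is the maximum of two quantities: the quality-dependent term $\frac{Q}{\alpha}\cdot\frac{\binom{K}{K\gamma+1}-\binom{K-w}{K\gamma+1}}{\binom{K}{K\gamma}}$ and the constant $T_{\text{MAN}}$. Since $T_{\text{MAN}}$ is itself one of the two arguments of the $\max$, the delivery time is always at least $T_{\text{MAN}}$, and it attains exactly $T_{\text{MAN}}$ if and only if the quality-dependent term does not exceed $T_{\text{MAN}}$. Hence the problem of finding the largest feasible $Q$ reduces to solving the scalar inequality $\frac{Q}{\alpha}\cdot\frac{\binom{K}{K\gamma+1}-\binom{K-w}{K\gamma+1}}{\binom{K}{K\gamma}}\le T_{\text{MAN}}$ for $Q$, which is linear in $Q$ and therefore immediate.

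The one computational step that makes everything line up is rewriting $T_{\text{MAN}}$ in binomial form. Starting from \eqref{eqMNdelay}, a direct simplification of the ratio of factorials gives $T_{\text{MAN}}=\frac{K(1-\gamma)}{1+K\gamma}=\frac{\binom{K}{K\gamma+1}}{\binom{K}{K\gamma}}$, which can also be seen as a one-line consequence of the binomial identities recalled in the notation. Substituting this into the inequality above and cancelling the common denominator $\binom{K}{K\gamma}$ on both sides yields $\frac{Q}{\alpha}\left(\binom{K}{K\gamma+1}-\binom{K-w}{K\gamma+1}\right)\le\binom{K}{K\gamma+1}$. Solving for $Q$ then gives the upper bound $Q\le\frac{\alpha\binom{K}{K\gamma+1}}{\binom{K}{K\gamma+1}-\binom{K-w}{K\gamma+1}}$; note the bracketed difference is strictly positive for $w\ge 1$, so the division is legitimate.

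Finally I would impose the physical feasibility constraint. Since the served quality is by definition normalised with $Q\in(0,1]$, the achievable maximum is the binomial expression above capped at $1$, i.e. $Q^{\star}=\min\left\{\frac{\alpha\binom{K}{K\gamma+1}}{\binom{K}{K\gamma+1}-\binom{K-w}{K\gamma+1}},\,1\right\}$, which is exactly \eqref{eqQuality}. The largest-$Q$ assignment is achievable because Th.~\ref{theMain} is achievable for every admissible $Q$, so the supremum is attained. The only real subtlety here is the rewriting of $T_{\text{MAN}}$ as a binomial ratio; once that identity is in hand the corollary is just an inequality inversion plus the trivial clipping to the interval $(0,1]$.
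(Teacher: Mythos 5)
Your proposal is correct and follows essentially the same route as the paper, which simply equates the two arguments of the $\max$ in \eqref{eqCompletionTime} and solves for $Q$; your additional steps (the identity $T_{\text{MAN}}=\binom{K}{K\gamma+1}/\binom{K}{K\gamma}$, the positivity of the denominator, and the cap at $1$) just make that one-line argument explicit.
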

\begin{proof}
The proof is direct by equating the two parts of \eqref{eqCompletionTime} from Th.~\ref{theMain}.
\end{proof}

\begin{figure}[th!]
  \centering
\includegraphics[width=0.9\columnwidth]{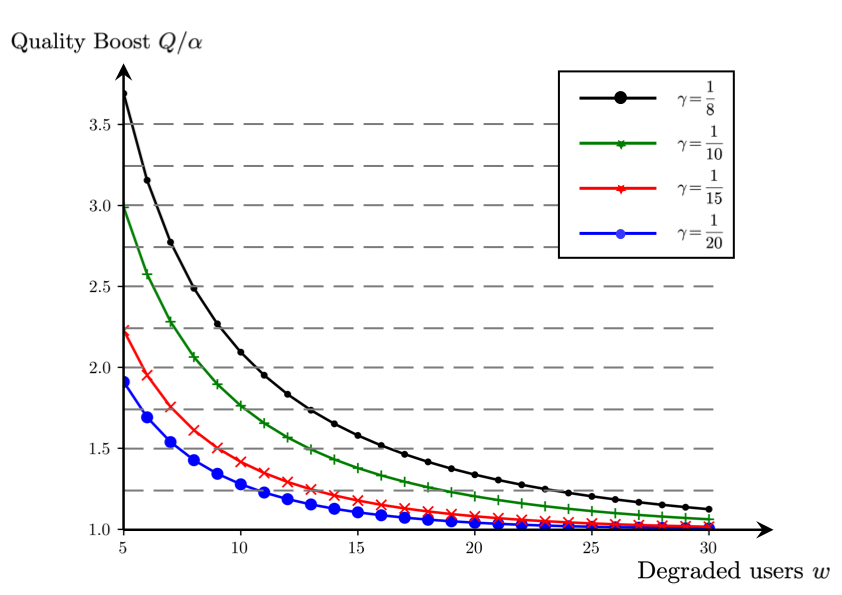}
\caption{Quality boost over the $Q \!=\! \alpha$ baseline as a function of the number of degraded users, in a channel with $K\!=\!100$ users, for various values of $\gamma$.}
\label{figBoostVusers}
\end{figure}

\begin{corollary}\label{corAlphaGuarantee}\textbf{Bound on minimum achievable quality:}
	In the setting of Th.~\ref{theGeneralCase} assigning quality at each user equal to the amount of this user's channel degradation i.e., $Q_{k}=\alpha_{k}$ would yield a delay of $T_{\text{MAN}}$ or less i.e.,
	\begin{align}
		T(K,\gamma,\mathbf{Q}\!=\!\boldsymbol\alpha, \boldsymbol\alpha) \leq T_{\text{MAN}}
	\end{align}
\end{corollary}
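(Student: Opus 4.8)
The plan is to substitute $Q_{k}=\alpha_{k}$ directly into the delivery-time expression of Th.~\ref{theGeneralCase} and to verify that each of the $K$ quantities appearing inside the maximisation in \eqref{eqTimeMultiRates} is individually bounded above by $T_{\text{MAN}}$; since the maximum of finitely many numbers that are each at most $T_{\text{MAN}}$ is itself at most $T_{\text{MAN}}$, this suffices. The first move is to recast $T_{\text{MAN}}$ in a binomial-friendly form. Combining $T_{\text{MAN}}=\frac{K(1-\gamma)}{1+K\gamma}$ with the elementary ratio $\binom{K}{K\gamma+1}/\binom{K}{K\gamma}=\frac{K-K\gamma}{K\gamma+1}$, I would record that
\begin{align}
	T_{\text{MAN}} = \frac{\binom{K}{K\gamma+1}}{\binom{K}{K\gamma}},
\end{align}
which places the target on the same footing as the $w$-th term of \eqref{eqTimeMultiRates}, whose denominator is $\alpha_{w}\binom{K}{K\gamma}$.

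Next, I would fix an arbitrary $w\in[K]$ and write the numerator of the $w$-th term as $\sum_{i=1}^{w}\alpha_{w-i+1}\binom{K-i}{K\gamma}$ (so that the $i=1$ term is $\alpha_{w}\binom{K-1}{K\gamma}$ and the $i=w$ term is $\alpha_{1}\binom{K-w}{K\gamma}$, matching the statement). Here the standing ordering assumption $\alpha_{1}\le\dots\le\alpha_{K}$ is the crucial ingredient: since $w-i+1\le w$ for every $i\ge 1$, each coefficient satisfies $\alpha_{w-i+1}\le\alpha_{w}$, which lets me factor $\alpha_{w}$ out of the numerator and cancel it against the $\alpha_{w}$ in the denominator. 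The claim for index $w$ then reduces to the purely combinatorial inequality
\begin{align}
	\sum_{i=1}^{w}\binom{K-i}{K\gamma} \le \binom{K}{K\gamma+1}.
\end{align}

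To close this I would reindex by $k=K-i$, turning the left-hand side into $\sum_{k=K-w}^{K-1}\binom{k}{K\gamma}$, and apply the Hockey-stick identity $\sum_{k=r}^{n}\binom{k}{m}=\binom{n+1}{m+1}-\binom{r}{m+1}$ with $m=K\gamma$, $r=K-w$, and $n=K-1$. This evaluates the sum exactly as $\binom{K}{K\gamma+1}-\binom{K-w}{K\gamma+1}$, and since the subtracted binomial is nonnegative the desired inequality is immediate. Assembling the pieces, the $w$-th term is at most $\binom{K}{K\gamma+1}/\binom{K}{K\gamma}=T_{\text{MAN}}$ for every $w$, and therefore so is the maximum.

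I do not anticipate a genuine obstacle: the argument is a one-line monotonicity estimate followed by a Hockey-stick evaluation. The only points that require care are bookkeeping ones — correctly matching the descending binomial arguments $\binom{K-i}{K\gamma}$ to the summation range in the Hockey-stick identity, and ensuring the ascending order of $\boldsymbol\alpha$ is invoked in the correct direction, since it is precisely $\alpha_{w-i+1}\le\alpha_{w}$ that validates the factoring. It is also worth remarking that the exact evaluation reveals the slack in the final combinatorial step to be $\alpha_{w}\binom{K-w}{K\gamma+1}$, which vanishes only when $K-w\le K\gamma$; this makes transparent how much room remains below $T_{\text{MAN}}$ for each user index.
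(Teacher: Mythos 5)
Your argument is correct and follows essentially the same route as the paper's Appendix A proof: bound each term of the maximisation individually by using the ordering $\alpha_{i}\leq\alpha_{w}$ for $i\leq w$ to cancel the channel strengths, then evaluate the resulting binomial sum via the Hockey-stick identity as $\binom{K}{K\gamma+1}-\binom{K-w}{K\gamma+1}\leq\binom{K}{K\gamma+1}=T_{\text{MAN}}\binom{K}{K\gamma}$. The only difference is cosmetic (your explicit reindexing and the closing remark on the slack term), so there is nothing to add.
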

\begin{proof}
	The proof is presented in Appendix~\ref{appProofAlphaGuarantee}.
\end{proof}

\begin{corollary}\label{corMaxMinQuality}\textbf{Quality of Service Guarantee:}	In the setting of Th.~\ref{theGeneralCase}
with the objective of maximizing the minimum quality for target delay $T_{\text{MAN}}$, we can design the quality allocation such that
\begin{align}
	Q_{k} {=} \max\left\{ \alpha_{k}, \ \hat{Q}  \right\}, \ \forall k\in[K]
\end{align}	
where
\begin{align}
	\hat{Q} = \min_{w\in[K]} \left\{ \frac{\alpha_{w} \binom{K}{K\gamma+1} }{\binom{K}{K\gamma+1} -\binom{K-w}{K\gamma+1}}\right\}.
\end{align}
\end{corollary}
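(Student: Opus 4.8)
The plan is to treat this as a standard max--min optimisation: first reduce the delay constraint $T \le T_{\text{MAN}}$ to a family of $K$ linear inequalities in the qualities, then verify that the proposed allocation is both feasible and optimal. The first step is to note, via the hockey-stick identity, that $\sum_{i=1}^{w}\binom{K-i}{K\gamma} = \binom{K}{K\gamma+1} - \binom{K-w}{K\gamma+1}$, so that setting $w=K$ and $Q_i=\alpha_i=1$ in \eqref{eqTimeMultiRates} recovers $T_{\text{MAN}} = \binom{K}{K\gamma+1}/\binom{K}{K\gamma}$. Dividing the $w$-th term of \eqref{eqTimeMultiRates} by this value, the requirement $T \le T_{\text{MAN}}$ becomes, for every $w \in [K]$,
\begin{align}
\sum_{i=1}^{w} Q_i \binom{K-i}{K\gamma} \le \alpha_w \binom{K}{K\gamma+1}.
\end{align}

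Before checking feasibility I would record the elementary bound $\hat{Q} \ge \alpha_1$: since $\binom{K}{K\gamma+1} - \binom{K-w}{K\gamma+1} \le \binom{K}{K\gamma+1}$, every term in the minimum defining $\hat{Q}$ is at least $\alpha_w \ge \alpha_1$. Consequently the proposed allocation $Q_k = \max\{\alpha_k,\hat{Q}\}$ has minimum value exactly $\hat{Q}$, is non-decreasing in $k$, and is bounded by $1$ (the $w=K$ term of the minimum equals $\alpha_K \le 1$). Let $w^\star$ denote the largest index with $\alpha_{w^\star} \le \hat{Q}$, so that $Q_k = \hat{Q}$ for $k \le w^\star$ and $Q_k = \alpha_k$ for $k > w^\star$.

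For feasibility I would verify the reduced constraint in two regimes. For $w \le w^\star$ the left side equals $\hat{Q}\bigl(\binom{K}{K\gamma+1} - \binom{K-w}{K\gamma+1}\bigr)$, and the inequality is exactly the defining bound $\hat{Q} \le \alpha_w\binom{K}{K\gamma+1}/\bigl(\binom{K}{K\gamma+1} - \binom{K-w}{K\gamma+1}\bigr)$ enforced by the minimum. For $w > w^\star$ I would split the sum at $w^\star$, bound $Q_i=\hat{Q}\le\alpha_w$ on the first block and $Q_i=\alpha_i\le\alpha_w$ on the second, collapse via hockey-stick, and obtain the left side $\le \alpha_w\bigl(\binom{K}{K\gamma+1} - \binom{K-w}{K\gamma+1}\bigr) \le \alpha_w\binom{K}{K\gamma+1}$. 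This second regime is precisely the ``free quality'' claim --- boosting the high-rate users to their full channel strength never tightens the delay --- and I expect it to be the conceptual crux, although it resolves cleanly once $\hat{Q}\le\alpha_w$ is invoked.

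Finally, for optimality I would argue that no feasible allocation can guarantee a minimum quality strictly above $\hat{Q}$. Let $w_0$ attain the minimum in the definition of $\hat{Q}$, and suppose every $Q_i \ge q$. The reduced constraint at $w=w_0$, together with $\sum_{i=1}^{w_0}\binom{K-i}{K\gamma} = \binom{K}{K\gamma+1} - \binom{K-w_0}{K\gamma+1}$, forces $q \le \alpha_{w_0}\binom{K}{K\gamma+1}/\bigl(\binom{K}{K\gamma+1} - \binom{K-w_0}{K\gamma+1}\bigr) = \hat{Q}$. Hence $\hat{Q}$ is the optimal max--min value and the allocation $Q_k = \max\{\alpha_k,\hat{Q}\}$ attains it while meeting the target delay $T_{\text{MAN}}$.
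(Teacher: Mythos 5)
Your proof is correct and follows essentially the same route as the paper's Appendix B: reduce the delay constraint to the linear inequalities $\sum_{i=1}^{w} Q_i \binom{K-i}{K\gamma} \le \alpha_w \binom{K}{K\gamma+1}$, verify feasibility of $Q_k=\max\{\alpha_k,\hat{Q}\}$ by splitting at the threshold index, and bound the optimum via the constraint at the minimizing $w$. Your converse step is stated a bit more explicitly than the paper's (which simply asserts that $Q_1$ is maximized by equalizing the first qualities), but the argument is the same.
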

\begin{proof}
The proof is presented in Appendix~\ref{appMaxMinQuality}.
\end{proof}

\begin{corollary}\label{corFairnessGuarantee}\textbf{Proportional Fairness Guarantee:}
	In the setting of Th.~\ref{theGeneralCase} we can design a quality allocation {policy} such that the quality at every user is proportional to its channel {strength,} while achieving the delivery delay {$T_{\text{tar}} \geq T_{\text{MAN}}$} i.e.,
	\begin{align}
		Q_{k} = \min\left\{ \beta \cdot \alpha_{k}, \ 1\right\} \ \beta\geq 1, \ \forall k\in[K].
	\end{align}

\end{corollary}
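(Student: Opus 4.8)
The plan is to treat $\beta$ as a single tuning knob and argue that the proportional allocation $Q_k = \min\{\beta\alpha_k, 1\}$ sweeps the delivery time continuously and monotonically from the baseline value up to the full-quality value, so that any target $T_{\text{tar}} \ge T_{\text{MAN}}$ is reachable by a suitable $\beta \ge 1$. First I would analyse the \emph{uncapped} regime, where $\beta\alpha_k \le 1$ for every $k$. Substituting $Q_k = \beta\alpha_k$ into the expression of Th.~\ref{theGeneralCase} lets $\beta$ factor out of the numerator of each term in the maximisation, giving
\[
	T(K,\gamma,\beta\boldsymbol\alpha,\boldsymbol\alpha) = \beta\cdot T(K,\gamma,\boldsymbol\alpha,\boldsymbol\alpha).
\]
By Cor.~\ref{corAlphaGuarantee} the baseline $T(K,\gamma,\boldsymbol\alpha,\boldsymbol\alpha)$ is at most $T_{\text{MAN}}$, so at $\beta=1$ the delivery time already satisfies $T(1) \le T_{\text{MAN}} \le T_{\text{tar}}$, which also confirms that scaling in the direction $\beta \ge 1$ is the correct way to boost quality.

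Next I would establish the two structural properties that make the knob usable. \textbf{Monotonicity} is read directly off the formula of Th.~\ref{theGeneralCase}: raising any $Q_j$ can only enlarge the numerators of the terms with $w \ge j$, so the maximum is non-decreasing in each coordinate; since $\beta \mapsto \min\{\beta\alpha_k,1\}$ is non-decreasing, the composite $\beta \mapsto T(\beta)$ is non-decreasing. \textbf{Continuity} follows because $T(\beta)$ is a finite maximum of ratios that are continuous (indeed piecewise-linear) in $\beta$, and the cap $\min\{\cdot,1\}$ is itself continuous.

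With monotonicity and continuity in hand, existence of the required $\beta$ is an intermediate-value argument. Letting $\beta$ grow saturates the allocation to $Q_k = 1$ for all $k$, at which point $T(\beta)$ reaches its largest value, which by the Hockey-stick identity equals $T_{\max} = \max_{w\in[K]}\{(\binom{K}{K\gamma+1}-\binom{K-w}{K\gamma+1})/(\alpha_w\binom{K}{K\gamma})\}$. Hence $T(\cdot)$ moves continuously and monotonically from $T(1)\le T_{\text{tar}}$ up to $T_{\max}$. If $T_{\text{tar}} \le T_{\max}$ the intermediate value theorem yields a $\beta \ge 1$ with $T(\beta)=T_{\text{tar}}$; if instead $T_{\text{tar}} > T_{\max}$, then full quality already meets the deadline and any $\beta$ large enough to activate every cap suffices. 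Either way we obtain $\beta \ge 1$ delivering $Q_k = \min\{\beta\alpha_k,1\}$ within time $T_{\text{tar}}$, as claimed.

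I expect the main obstacle to be the bookkeeping across the cap. The uncapped linearity is clean, but once the high-rate users saturate at $Q_k=1$ the slope of $T(\beta)$ changes and the bottleneck index $w$ attaining the outer maximum may shift. The care required is only to verify that these slope changes break neither continuity nor monotonicity---which they do not, since a finite maximum of continuous non-decreasing functions is again continuous and non-decreasing---so that the intermediate-value step remains valid regardless of which user forms the bottleneck at the selected $\beta$.
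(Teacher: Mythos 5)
Your proposal is correct and follows essentially the same route as the paper, whose proof is the one-line observation that scaling the allocation of Cor.~\ref{corAlphaGuarantee} by a factor $\beta\geq 1$ works; your anchor step ($\beta=1$ gives $T\leq T_{\text{MAN}}\leq T_{\text{tar}}$ by Cor.~\ref{corAlphaGuarantee}, with $T(\beta)=\beta\,T(\boldsymbol\alpha,\boldsymbol\alpha)$ in the uncapped regime) is exactly that argument. The additional monotonicity, continuity, and intermediate-value bookkeeping you supply to handle the caps and to show any target in $[T(1),T_{\max}]$ is exactly reachable is a valid and welcome elaboration of details the paper leaves implicit, not a different method.
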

\begin{proof}
	The proof is direct by {multiplying each quality $Q_{k}$ by factor $\beta\geq 1$} and using the result of Cor.~\ref{corAlphaGuarantee}.
\end{proof}

\section{Placement and Delivery algorithms}\label{secPlacementDelivery}

In this section we design and describe the placement and delivery algorithms that allow us to achieve the results of the two theorems.
The placement algorithm is borrowed from~\cite{maddah2014fundamental} and has he advantage that the pre-fetching phase requires no knowledge of future demands nor channel capacities.
{As is standard in coded caching, the pre-fetching phase is considered to bear no communication costs as it takes place during off-peak time e.g., when a user is connected to a Wi-Fi router.}

\subsection{Placement phase}

Every file $W^{n}$ is divided into $\binom{K}{K\gamma}$ subfiles, {$W^{n}_{\tau}$, where} each is denoted by a unique tuple $\tau$ of size $|\tau| = K\gamma$. The cache of user $k\in[K]$ is then filled as follows
\begin{equation}
	\mathcal{Z}_{k} = \bigcup_{\tau \ni k}  W^{n}_{\tau},  \ n\in [N].
\end{equation}

For users of set $\sigma$, $|\sigma| = K\gamma+1$ there is a unique multicast message that once communicated can satisfy part of their demands.
Denoting by $\oplus$ the bit-wise XOR operator, a multicast message for users of set $\sigma$ takes the form
\begin{align}
	X_{\sigma} = \bigoplus_{k\in\sigma} W^{d_{k}}_{\sigma \setminus\{k\}}.
\end{align}

\subsection{Delivery phase}

The delivery phase begins when all requests of the users have been communicated to the base station.
At that point the base station determines the channel capacity vector $\boldsymbol\alpha$ which, along with the target delivery time $T_{\text{tar}}\geq T_{\text{MAN}}$\footnote{{Achieving a delivery time lower than $T_{\text{MAN}}$ is of course possible by, for example, scaling each quality, even at the highest strength users, by the same factor.
Since such a scenario does not affect the design of the placement and delivery algorithms we knowingly skip such analysis.}}, allow the calculation of the quality vector $\mathbf{Q}$. 
This quality $\mathbf{Q}$ can be calculated using a plethora of metrics such as target delivery time (see Th.~\ref{theGeneralCase},~\ref{theMain}), quality of service (see Cor.~\ref{corMaxMinQuality}), maximizing the overall system quality, etc.
In other words, the objective of this step lies in finding the proper balance between adaptive quality at the users and optimising the system's resources.

At this point, though, we assume that quality vector $\mathbf{Q}$ has already been decided by the system (with one of the aforementioned criteria), and proceed with the description of the delivery algorithm.
{
By examining Th.~\ref{theGeneralCase},~\ref{theMain} we can see that the delivery time is dependent on the quality allocation $\mathbf{Q}$ that has been chosen by the base station, which we denote hereupon $T_{\text{tar}}$.}
We note that in the subsequent Sec.~\ref{secAlgosQualityAllocation} we present three different algorithms {to optimise the quality vector $\mathbf{Q}$}, and analyse their performance.

{
Each subfile $W^{n}_{\tau}$ is encoded through scalable source coding in such a manner that decoding the first fraction $Q$ of all the bits of the stream allows one to decode the file with quality $Q$.
We denote a sequence of $q_{k}$ bits of a subfile by its relative size $q_{k}$ i.e., $W^{n,q_{k}}_{\tau}, \ k\in[K]$.
And consequently, each multicast message is now comprised of subfiles of the same quality, hence a multicast message for quality $q_{k}$
 is denoted as
$X_{\sigma}^{q_{k}} = \bigoplus_{k\in\sigma} W^{d_{k},q_{k}}_{\sigma \setminus\{k\}}.$}

\subsection*{Sub-signals}

To communicate the requested files to the users we employ the {well-known} superposition coding technique\footnote{We note that superposition coding for multicasting a common message such that users with better channels decode a larger portion of the message is known as ``broadcast channel with degraded message set’’  in Information Theory (see also~\cite{el2011network}).
In addition, it is well-known that superposition coding yields a general achievable rate region for the degraded message set broadcast channel, which is optimal in the case of stochastic degradation \cite{korner1977general,nair2009capacity}, and is shown to be order optimal for the the case of coded caching in degraded broadcast channels without quality adaptation~\cite{lampirisWorstUserIZS2019,joudeh2021fundamental}.} {(cf.~\cite{cover1999elements})} where, in this case, the power of the transmitted signal $x$ is divided into $K$ sub-signals, $\{x_{k}, k\!\in\![K]\}$, each sub-signal having power $P_{k}$.
The power of every sub-signal $x_{k}$ is chosen in such a manner that it can be decodable by users $k, ..., K$.
Hence, each $x_{k}$ carries information that, along with the information of all previous sub-signals, allow user $k$ to decode its file with quality $Q_{k}$.
For example, sub-signal $x_{1}$ carries all the messages that user $1$ needs in order to successfully decode file $W^{d_{1}}$ with quality $Q_{1}$.
Sub-signal $x_{2}$ carries all the messages that user $2$ needs, apart from those that have been transmitted in $x_{1}$, in order to successfully decode file $W^{d_2}$ with quality $Q_{2}$, and so on.

\subsection*{Information amount at each sub-signal}

As discussed above, the requirement is to communicate through each sub-signal all multicast messages that have not been included in any previous sub-signal.
This means that sub-signal $x_{1}$ carries all multicast messages of quality $Q_{1}$ that include user $1$.
Further, in sub-signal $x_{2}$ we need to communicate all messages of quality $Q_1$ that include user $2$ but not user $1$ and, also, all messages with quality $Q_2$ that include user $2$.
In total, the size of the information -- measured in number of sub-files -- that each sub-signal needs to carry is
\begin{align}\label{eqTransmittedInfoA}\nonumber
	\ell_{n}(\mathbf{Q}) \!=\! q_{n}\binom{K\!-\!1}{K\gamma} &+ Q_{n\!-\!1}\binom{K\!-\!n}{K\gamma} \!+\\
		+ \! \sum_{i=2}^{n-1}&(Q_{n-1}\!-\!Q_{i-1} )\binom{K\!-\!n\!+\!i\!-\!2}{K\gamma\!-\!1}
\end{align}
where the proof being presented in Appendix~\ref{appAmountOfInfoMultiRates}.

{Further, the amount of information corresponding to all sub-signals of set $[k], \ k\in[K]$, i.e.
\begin{align}\label{eqLequalsEll}
	L_{k}(\mathbf{Q}) \triangleq \sum_{n=1}^{k} \ell_{n}(\mathbf{Q})
\end{align}
is calculated to be
}
\begin{align}\label{eqQgeqAlphaA}
	L_{k}( \mathbf{Q} ) = Q_{k}\binom{K-1}{K\gamma} + \dots + Q_{1}\binom{K-k}{K\gamma}
\end{align}
where the proof is also presented in Appendix~\ref{appAmountOfInfoMultiRates}.

We can observe that \eqref{eqQgeqAlphaA} corresponds to the numerator of Th.~\ref{theGeneralCase}.
Further, the calculation of the amount of information for the special case of two different rates is achieved by setting $Q\triangleq Q_{1}=...=Q_{w}$ and using the Hockey-stick identity, simplifying \eqref{eqQgeqAlphaA} to 
\begin{equation}
	L_{w}(Q)  = Q \cdot  \left[\binom{K}{K\gamma+1} - \binom{K-w}{K\gamma+1} \right].
\end{equation}

\subsection*{Power allocation}

The next step is the allocation of power at each sub-signal.
The main premise is that any user $k$ should be able to decode sub-signal $x_{k}$ and, by extent due to the aforementioned channel degradation hierarchy, sub-signals $x_{1}, ..., x_{k-1}$.
At the same time we seek to minimise the delivery time, subject to the power constraint.
We begin by determining set $[w]$, whose total load would produce the highest delivery time i.e.,
\begin{align}\label{eqMaxLoad}
	w = \arg\max_{k\in[K]} \left\{ \frac{L_{k}(\mathbf{Q})}{\alpha_{k}} \right\}.
\end{align}
Then, we proceed to calculate the power exponents $\pi_{n} = \frac{ L_{n} }{L_{w} } \alpha _{w}$ of each sub-signal.
By setting {$\pi_{0} =1$}, each sub-signal $x_{n}$ is eventually transmitted with power
\begin{align}
	P_{n} = P^{-\pi_{n\!-\!1}} - P^{-\pi_{n}}.
\end{align}

\subsection{Decoding at the users}

Decoding a set of superimposed signals at a receiver requires the sequential use of ``Treating Interference as Noise'' technique {(cf.~\cite{geng2015optimality})}.
At each step a user decodes the message of the highest powered sub-signal by treating each of the remaining sub-signals as gaussian noise.
Then, knowing the content of this message the user can remove it from the received signal and continue with the decoding of the next highest powered sub-signal by treating the rest as noise.
This process is repeated until all sub-signals which are received with power above the noise level have been decoded\footnote{Notice again that this successive interference cancellation at receiver k for all subsignals
$x_1, \ldots x_k$, while treating $x_{k+1}, \ldots, x_K$ as noise is well-known (and in fact, capacity achieving) for the standard (no caching) Gaussian degraded broadcast channel~\cite{cover1999elements} and order optimal for the degraded coded caching scheme without quality adaptation \cite{lampirisWorstUserIZS2019,joudeh2021fundamental}.}.

In our case, the number of sub-signals that some user $k$ can decode are $k$, due to the selected power allocation, hence this process is repeated $k$ times.
Let us examine this process at arbitrary user $k$, whose received signal takes the form
\begin{align}\label{eqSignalMultiRate}
	y_{k}\!  =\!  \sqrt{P^{\alpha_k}} ( \underbrace{x_{1}}_{ \sqrt{ 1\!-\!P^{\!-\!\pi_{1}}} } +  \underbrace{x_{2}}_{\sqrt{ P^{\!-\!\pi_{1}}\! -\! P^{-\!\pi_{2}}}}  + ... + x_{K})  +z_{k}
\end{align}
{where for simplicity we have abstained from using the phase of the channel coefficients.}
An arbitrary sub-signal $x_{n}$, ${n\leq k}$ is received at user $k$ with power $P^{\alpha_{k}} P_{n}>1$,
{
and its decoding rate is given as a function of its SNR value, taking the following form
\begin{align}
	r_{n} = \log( 1 + \text{SNR}_{n})  = \log\left(  1+ \frac{P^{\alpha_{k}}(P^{-\pi_{n-1}}-P^{-\pi_{n}})}{1 + P^{\alpha_{k}}P^{-\pi_{n}} } 	  \right).
\end{align}

Then, the GDoF rate is calculated as
\begin{align}
	R_{n} = \lim_{P\to \infty} \frac{ r_{n}}{\log(P) } = \pi_{n} - \pi_{n-1} =  \frac{\ell_{n} }{L_{w}}\alpha_{w}.
\end{align}

}

\subsection{Calculation of the delivery time}\label{secDeliveryTime}

The total delivery time is equal to the time required by any user to decode all the messages {that it needs to receive to obtain the file at the desired level of quality.}
By design, user $k$ is only interested in the first $k$ sub-signals hence, the time required to decode all messages is equal to the maximum time required to decode any of these sub-signals.

The delivery time of each sub-signal $n$ is {proportional to} its overall load, i.e. $\frac{\ell_{n}}{\binom{K}{K\gamma}}$, and {inversely proportional to} its rate, i.e. $R_{n} = \frac{\ell_{n}}{L_{w}}\alpha_{w}$.
Hence, the delivery time {$t_{n}$} for each sub-signal takes the form
\begin{align}\label{eqSubsignalTime}
	t_{n}=\frac{ \frac{\ell_{n}}{\binom{K}{K\gamma}}  }{  \frac{ \ell_{n} }{ L_{w} } \alpha_{w}} = \frac{L_{w}}{\alpha_{w}\binom{K}{K\gamma}} = T_{\text{tar}}.
\end{align}
{
which matches the target delivery time.

From \eqref{eqSubsignalTime} we conclude that the delivery of Th.~\ref{theGeneralCase} (and by extend the special case of Th.~\ref{theMain}) is achievable.

}

\section{Quality Allocation Algorithms}\label{secAlgosQualityAllocation}

\begin{figure*}[!t]
  \begin{minipage}{\textwidth}
    \centering

    \subfigure[$K\gamma = 2$]{\includegraphics[width=0.31\linewidth]{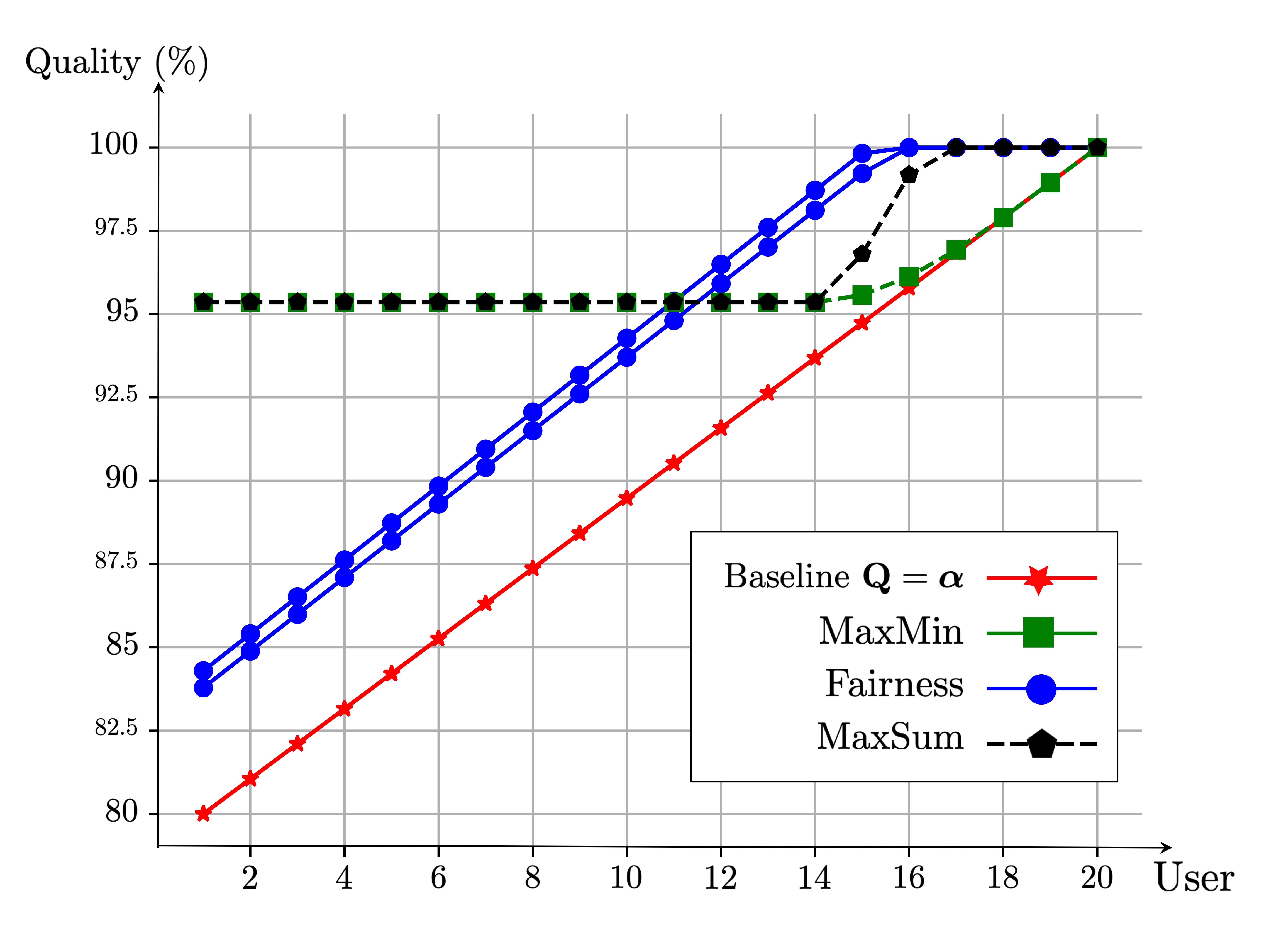}} \label{comparisonAlgos1}
    \hfill
    \subfigure[$K\gamma = 4$]{\includegraphics[width=0.31\linewidth]{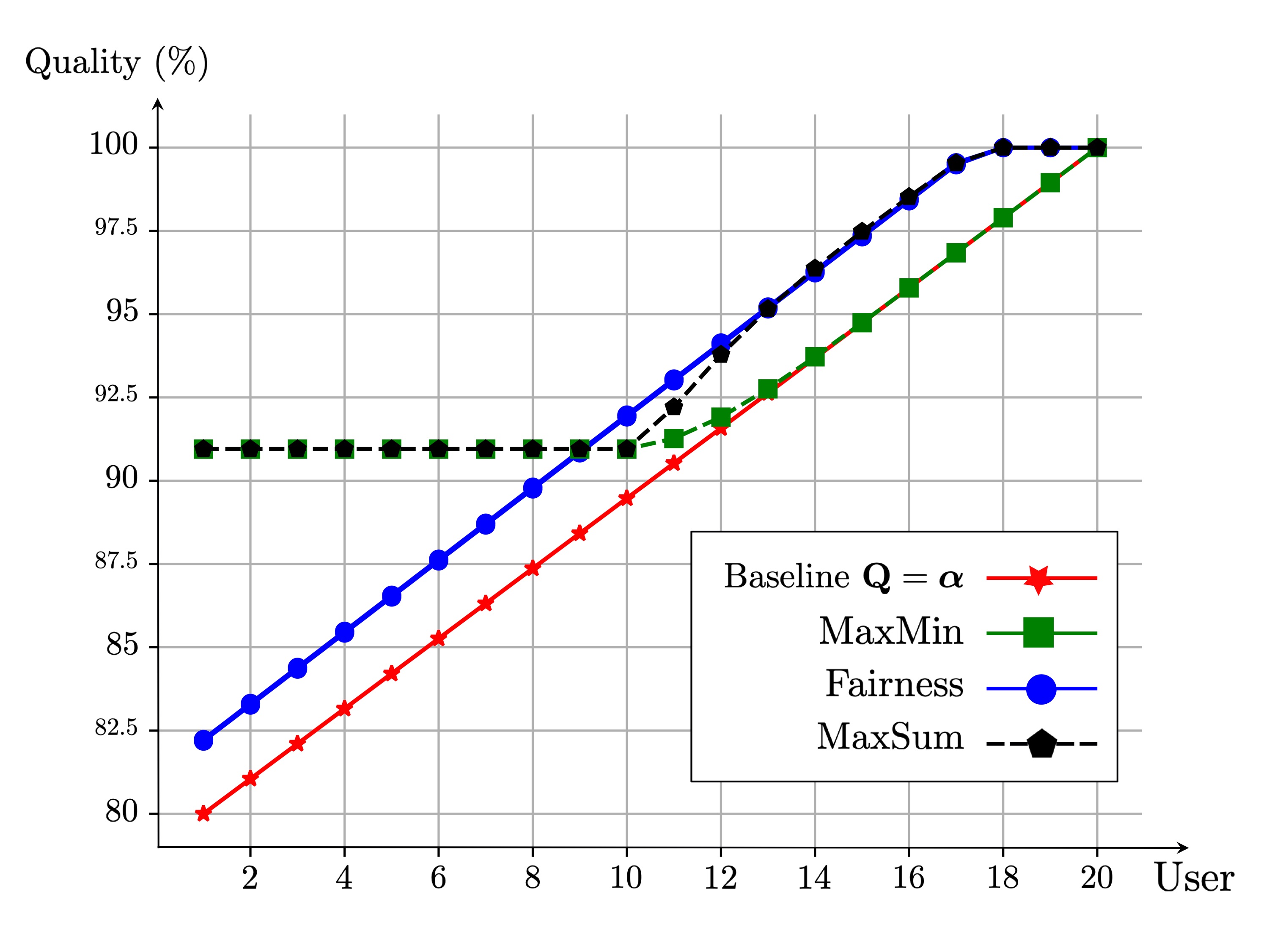}}\label{comparisonAlgos2}
    \hfill
    \subfigure[$K\gamma = 7$]{\includegraphics[width=0.31\linewidth]{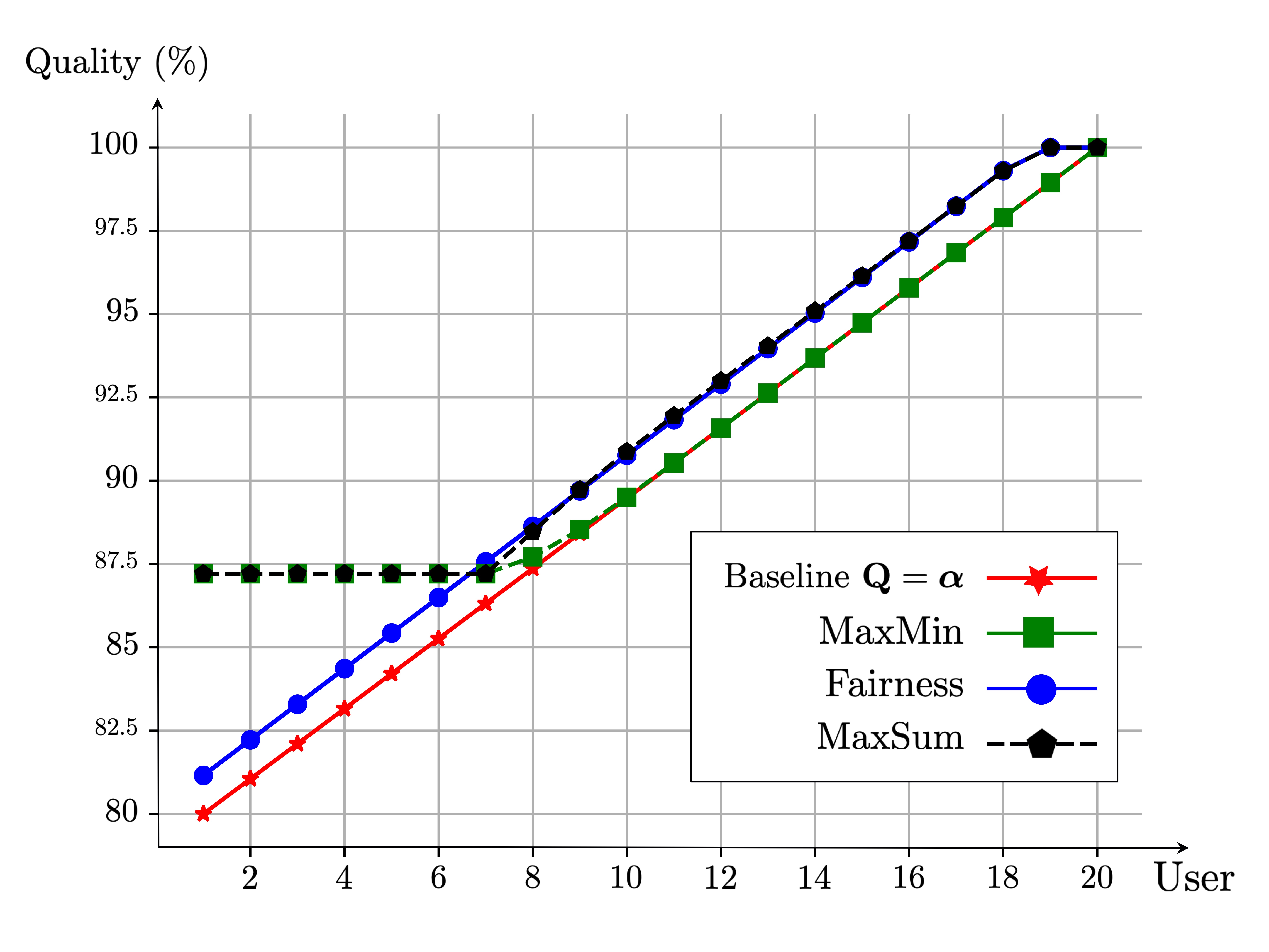}}\label{comparisonAlgos3}

    \caption{Comparison of the per-user quality achieved by each of the algorithms.
\textit{Baseline} corresponding to $\mathbf{Q} \!=\! \boldsymbol\alpha$ (red solid line with stars).
\textit{Proportional Fairness optimisation} (blue solid with circles).
\textit{MaxMin optimisation} (green solid with squares).
\textit{MaxSum quality} (black dashed with pentagons).
Setting: $K\!=\!20$ users, $\alpha_{k} = 0.8 + 0.2\frac{k-1}{K-1}$, $T_{\text{tar}}= T_{\text{MAN}}$.
The double line appearing in the fairness results in (a) is the outcome of applying the algorithm twice in order to achieve the optimal result.}
    \label{figCompThreeAlgos}
  \end{minipage}
\end{figure*}

The last objective of this work is the design of algorithms that {optimise} the quality {that is} communicated to each user such that to enable the system to achieve the target delivery time while at the same time optimising other metrics, such as fairness, and quality of service.

\subsection{Proportional Fairness Optimisation}\label{secPropFairness}

The first quality allocation algorithm that we propose is based on a proportional fairness criterion.
Specifically, the goal is to design the quality vector such that the relative quality between two degraded users to be equal to the relative channel degradation between these users.
To achieve this, we ask that the file quality at each user {to be} proportional to the user's channel {strength} i.e., $Q_{k} = \min\{ \beta\!\cdot\! \alpha_{k}, 1\}$, $\beta\geq 1$.

Cor.~\ref{corAlphaGuarantee}, along with its corresponding proof presented in Appendix~\ref{appProofAlphaGuarantee}, provide the theoretical foundation for this approach, proving that such a solution is always {attainable} for any target delivery time $T_{\text{tar}}\geq T_{\text{MAN}}$.
The optimisation problem that we seek to solve takes the form
\begin{align}
	&&&\max  \beta  \\ \tag{C1} \label{condC1PropFairness}
	& \text{s.t.} &&\frac{L_{k} (\mathbf{Q}(\beta))}{\alpha_{k}\binom{K}{K\gamma}} \leq  T_{\text{tar}} , \  k\in[K] \\
 	& && Q_{k}(\beta) = \min\{ \beta\!\cdot\! \alpha_{k}, 1\}, \  k\in [K]. \tag{C2}
\end{align}

This problem can be solved by setting $Q_{k} = \min\{ \beta\!\cdot\! \alpha_{k}, 1\}$ and solving each condition $C1$ for $\beta$.
After iterating through all the conditions we retain the highest $\beta$ that satisfies all conditions $C1$.

{
\subsubsection*{\textbf{Note}}

The above algorithm ensures that each user quality is a multiple of the channel rate of that user.
Though, in some cases, one can further improve these qualities by making the following observation.
If for some user $n\in[K]$ the channel strength is $\alpha_{n}\!<\!1$, while at the same time $\beta\! \cdot\! \alpha_{n} \!>\!1$, the resulting quality should be $Q_{n} = 1$, which is strictly less than $\beta\cdot\alpha_{n}$.
This means, though, that conditions \eqref{condC1PropFairness} are not met with equality for any $k\ge n$, which further means that the quality of some users could be higher.

We can easily overcome this, and as a result increase the quality of some users, by pre-allocating the ``maxed-out'' qualities as $Q_{n} = 1$ and continue with a recalculation of $\beta$.
This effect is depicted in Fig.~\ref{figCompThreeAlgos}-(a), where we can see a double line in the calculation of the proportional fairness qualities.
}

\subsection{Max min optimisation}

The second quality allocation algorithm {that} we propose aims to provide a guarantee on the quality of service, by maximising the worst-user quality, hence providing a minimum quality at every user.
Optimising the smallest quality is equivalent to maximising $Q_1$, and as a consequence, calculating the MaxMin quality boils down to calculating the minimum among a set of values i.e.,
\begin{align}
	Q_{\text{MaxMin}} = \min_{w\in [K]} \left\{ \frac{\alpha_{w} \ T_{\text{tar}}\ \binom{K}{K\gamma}}{\binom{K}{K\gamma+1} - \binom{K-w}{K\gamma+1}} \right\}.
\end{align}
The final quality allocation at the users takes the form
\begin{align}
	Q_{k} = \max \{  Q_{\text{MaxMin}} , \ \alpha_{k} \}
\end{align}
which is guaranteed to always be a solution {as we prove in Appendix~\ref{appMaxMinQuality}.}

{
It is important to note here that this solution is not unique. In fact, there might be solutions that achieve the same max-min quality guarantee, while at the same time improving the quality of users with higher rates.
One such example is the sum-quality maximisation algorithm which we present in the following subsection and which we show analytically that matches the max-min quality allocated to the low-rate users, while improving the quality of the high-rate users.
However, we consider this algorithm because a) it is a solution of the max-min problem, and b) it is computationally simple. 

}

\subsection{Sum-quality maximisation}

The final quality allocation algorithm {that} we propose aims to maximise the overall system quality.
As we {analytically prove} in this section, an interesting property of this algorithm is that it also maximises the minimum quality.
In other words, the quality vector calculated through this algorithm is also the maximal allocation vector for achieving the Max Min optimisation.
The optimisation problem that we seek to solve takes the form
\begin{align}
	&&&\max \sum_{i=1}^{K} Q_{i} \\  \tag{C1} \label{eqSumQmaxCon1}
	& \text{s.t.} &&\frac{ Q_{k}\binom{K-1}{K\gamma}+ \dots +Q_{1}\binom{K-1}{K\gamma}}{\alpha_{k}\binom{K}{K\gamma}} \leq  T_{\text{tar}} , \  k\in[K] \\
 	& && Q_{n} \leq Q_{n+1}, \  n\in [K\!-\!1]. \tag{C2} \label{eqSumQmaxCon2}
\end{align}

The main idea behind characterising the exact solution to this {problem} is to take advantage of constraints \eqref{eqSumQmaxCon1} and \eqref{eqSumQmaxCon2} in order to reduce the search space.
Specifically, we show that {the process should start with the maximisation of the quality at user $1$, i.e. $Q_{1}$, and progressively maximising the quality at each subsequent user, given the previously calculated qualities.
In other words, this result shows that maximising the overall served quality starts by maximising the size of the base quality layer, i.e. the quality layer that each user receives, then maximising the size of the next quality layer, i.e. the one where only users $2, ..., K$ receive and so on.
This important observation and its consequences are further discussed in Sec.~\ref{secIntuitionAlgos}.
}

Let us begin by examining any equation of constraint \eqref{eqSumQmaxCon1}.
We can see that maximising the sum is equivalent to maximising the qualities in an ascending order, because quality $Q_{1}$ is always paired with the smallest factor in every equation of constraint \eqref{eqSumQmaxCon1}.
Which further means, that by maximising $Q_{1}$ we are adding the smallest ``cost'' to each constraint in \eqref{eqSumQmaxCon1}, and thus maximising the sum.

The maximisation of $Q_{1}$, as we already calculated in the Max Min optimisation section, is achieved for
\begin{align}
	Q_{1} = \min_{w_{1}\in [K]} \left\{ \frac{\alpha_{w_{1}} \ T_{\text{tar}}\ \binom{K}{K\gamma}}{\binom{K}{K\gamma+1} - \binom{K-w_{1}}{K\gamma+1}} \right\}.
\end{align}

Now that we have access to $Q_{1}$ we can proceed with calculating the next quality, $Q_{2}$.
By following the same argument process as before, we can deduce that by maximising quality $Q_{2}$ would introduce the smallest cost in each of the equations of \eqref{eqSumQmaxCon1} compared to any other quality (apart from the already calculated $Q_1$), which means that maximising $Q_{2}$ maximises the sum.
Then, $Q_{2}$ is calculated
    	\begin{align}
		Q_{2} &= \min_{w\geq2}  \left\{	\frac{ a_{w} T_{\text{tar}} \binom{K}{K\gamma} - Q_{1} \binom{K-w}{K\gamma}}{ \sum_{i=2}^{w} \binom{K+i-w-1}{K\gamma}}	\right\}.
	\end{align}

{The iterative equation to calculate each $Q_{n}$ is given by}
    	\begin{align}
		Q_{n} &= \min_{w\in\mathcal{S}}  \left\{	\frac{ a_{w} T_{\text{tar}} \binom{K}{K\gamma} -\sum_{i=1}^{n-1} Q_{i} \binom{K+i-w-1}{K\gamma}}{ \sum_{i=n}^{w} \binom{K+i-w-1}{K\gamma}}	\right\}
	\end{align}
where $S = \{ n,..., K\} $.

\subsection{Intuition and Discussion regarding the algorithms}\label{secIntuitionAlgos}

Let us begin by observing in Fig.~\ref{figCompThreeAlgos}, \ref{figBoostVusersMultiRate} a comparison between the three quality {optimisation} algorithms. 
First, we can see that maximising the sum-quality produces a maximal vector of the max-min case, meaning that in a per-element comparison the sum-quality output is always either equal or higher than the max-min algorithm, {confirming our theoretical analysis.}

A further observation is that as $\gamma$ increases the outputs of the sum-quality and the proportional fairness algorithms tend to ``converge'' and at the end approach the baseline $\mathbf{Q} = \boldsymbol\alpha$ solution, {though we should note that this convergence comes into effect for values of $\gamma$ that are not very practical.}

The final observation is centred around the performance comparison between the sum-quality maximisation algorithm and the proportional fairness algorithm. 
Specifically, by improving the base quality (which is experienced by all users) we are increasing the per-user quality.
This insight aligns with findings from the non-adaptive quality setting discussed in \cite{lampirisWorstUserIZS2019}, where it is demonstrated that the performance bottleneck typically arises from users in the middle rather than the worst-rate user.
{This has the consequence that,} even if the channels of all users {with channels worst than that} of the bottleneck user were enhanced, the system performance would remain unchanged. Applying this to our current scenario, we can infer that the quality of users can be increased relative to channel rate up to the level of the bottleneck user without affecting the delivery time significantly.

Another noteworthy observation relates to the augmentation of the file quality for users with high-rate channels.
Specifically, we can elevate the received quality for all these users as their channels surpass the bottleneck user's channel, allowing them to receive information via low-powered signals that don't interfere with lower-rate users. This observation resonates with findings in \cite{joudeh2021fundamental}, where it's shown that extra traffic can be communicated via ``topological holes'' without compromising the performance of the cache-aided multicast messages.

\section{Examples}

\subsection{Two-type user example}

Let us begin with an example from the two-type case, where we assume a broadcast channel with $K\!=\!6$ users, each equipped with a cache of size $\gamma=\frac{2}{6}$, and channel rate at each user $\alpha_{1,2} = \frac{2}{3}$, and $\alpha_{3,4,5,6} = 1$.
The worst-case delivery time for this setting (full quality at every user) is $T_{\text{deg}} =\frac{8}{5}$ (cf.~\cite{lampirisWorstUserIZS2019}), while the non-degraded channel would have had a delivery time equal to $T_{\text{MAN}} = \frac{4}{3}$.
To reduce the delivery time of the degraded channel to equate that of the non-degraded channel, we can reduce the quality at the degraded users which, according to Cor.~\ref{eqCorollaryMaxQuality} would be
\begin{equation}
	Q = \frac{ \alpha \binom{K}{K\gamma+1}}{ \binom{K}{K\gamma+1} - \binom{K-w}{K\gamma+1} } = \frac{ 2/3 }{ 1- \frac{4}{20} } = \frac{5}{6}.
\end{equation}

The placement phase aligns with the approach described in \cite{maddah2014fundamental}.
When the delivery phase begins, the first step of the algorithm is to determine the size of the quality layers and proceed to encode each subfile according to these layers. Specifically, each subfile (one out of $\binom{6}{2}$ subfiles) is split into two parts, where the first part has relative size $q_{1}=\frac{5}{6}$ namely the ``base-quality'', and the second part has relative size $q_{2}=\frac{1}{6}$ namely the ``increased quality".

The transmission power is split into two parts, the ``high-powered'' serving messages of interest to the degraded users, and the ``low-powered'' serving messages that are of interest only to the non-degraded users.
The high-powered message $x_{h}$ is transmitted with power $P_{h} = {1-P^{-\alpha}}$ and carries multicast messages with base-quality subfiles, and which are of interest to at least a user of set $\{1,2\}$. On the other hand, the low-powered message $x_{\ell}$ is transmitted with power $ P_{\ell} = {P^{-\alpha}}$ and includes the increased quality messages which are of interest to at least one user from the group of the non-degraded users, as well as full quality messages that are of interest only to users of the group of the non-degraded users. 
In particular,
\begin{align}
	x_{h} & \leftarrow \{ X^{q_{1}}_{123}, X^{q_{1}}_{124}, X^{q_{1}}_{125}, X^{q_{1}}_{126}, X^{q_{1}}_{134}, \cdots, X^{q_{1}}_{256} \}, \\
	x_{\ell} & \leftarrow \{ X^{q_{2}}_{123}, X^{q_{2}}_{124}, \cdots , X^{q_{2}}_{256}, X_{345}, X_{346}, \cdots X_{456} \}.
\end{align}

Decoding at an arbitrary transmitter from the degraded users' set, the received message takes the form
\begin{equation}
	y_{1} = \underbrace{\sqrt{P^{\alpha}}x_{h}}_{\sqrt{P^{\alpha}}} + \underbrace{\sqrt{P^{\alpha}}x_{\ell}}_{\sqrt{P^{0}}}
\end{equation}
where, by simply treating the low-powered message as noise one can decode message $x_{h}$ with rate $r_{h} = 2/3$.

In a similar manner, the message at a non-degraded user, e.g. user $4$, takes the form
\begin{equation}\label{eqSICmessage}
	y_{4} = \underbrace{\sqrt{P} x_{h}}_{\sqrt{P- P^{1-\alpha}}} + \underbrace{\sqrt{P} x_{\ell}}_{\sqrt{P^{1-\alpha}}}.
\end{equation}
Then, since this user is interested in decoding both messages, first the user would decode the high powered message by treating interference as noise, achieving rate
\begin{equation}\label{eqSICdecodingRate}
	R_{h} = \lim_{P \to\infty} \frac{\log \frac{ P_{h}}{1+P_{\ell}}}{\log P} = {\alpha} + \mathcal{O}(1).
\end{equation}
Further, after the user has successfully decoded message $x_{h}$ can proceed to remove it from \eqref{eqSICmessage} and {in a similar manner} decode message $x_{\ell}$ with rate $R_{\ell} = 1-\alpha$.

Eventually, the delivery time is calculated as the maximum between the time required to deliver messages $x_{h}$ and the time required to deliver messages $x_{\ell}$. We get
\begin{align}
	T_{h} &= \frac{ Q}{\alpha} \frac{ \binom{ K}{K\gamma+1} - \binom{ K-w}{K\gamma+1}}{ \binom{K}{K\gamma}} = \frac{5/6}{2/3} \cdot \frac{16}{15} = \frac{4}{3} \\
	T_{\ell} &= \frac{ 1\!-\! Q}{1\!-\!\alpha} \frac{ \binom{ K}{K\gamma+1}\! -\! \binom{ w}{K\gamma+1}}{ \binom{K}{K\gamma}}  + \frac{1}{1\!-\!\alpha}\frac{\binom{K-w}{K\gamma+1}}{\binom{K}{K\gamma}}=  \frac{4}{3}.
\end{align}

\subsection{Quality allocation for a multiple-rate setting}\label{exMultiRates1}

We proceed with a second example, this time showcasing the algorithms at play for a setting where there are more than 2 channel rates.
Specifically, the system in this example is a broadcast channel with $K=6$ users, each equipped with a cache of normalised size $\gamma = \frac{1}{3}$ and where, the channel strength vector is $\boldsymbol\alpha = \{ \frac{4}{8}, \frac{5}{8}, \frac{6}{8}, \frac{7}{8}, 1, 1 \}$.

{Assuming that the selected target delivery time is $T_{\text{MAN}}$,} the first step becomes the calculation of the delivered file quality at each user.
To this end, we showcase the fairness and the sum-quality maximisation methods, where the MaxMin calculation is not included as {being straightforward.}

\begin{figure}[th!]
  \centering
\includegraphics[width=0.9\columnwidth]{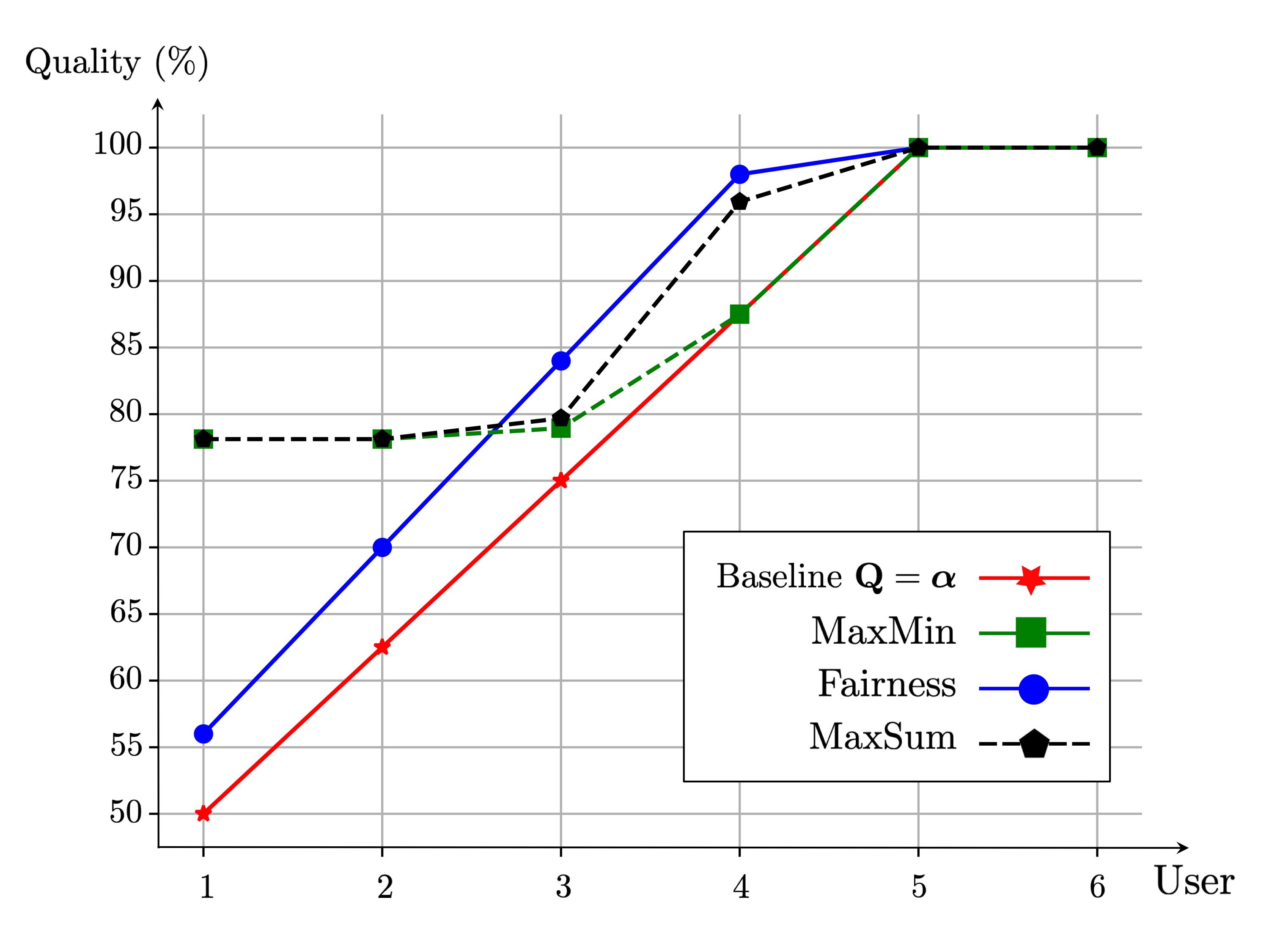}
\caption{File quality calculated by each of the three methods, for the multi-rate setting of the example presented in Sec.~\ref{exMultiRates1}.}
\label{figBoostVusersMultiRate}
\end{figure}

\subsubsection*{\textbf{Proportional Fairness optimisation}}

In order to determine the quality vector $\mathbf{Q}$ using the proportional fairness optimisation algorithm we set factor $\beta\geq 1$ such that
\begin{equation}
	\mathbf{Q} = \beta\boldsymbol\alpha =  \left( \beta\frac{4}{8},\  \beta\frac{5}{8}, \ \beta\frac{6}{8},\  \beta\frac{7}{8}, \ 1,\  1 \right).
\end{equation}

Consequently, the $6$ equations that help determine factor $\beta$, i.e. $\frac{L_{k}(\mathbf{Q})}{\alpha_{k}\binom{K}{K\gamma}} \leq T_{\text{MAN}}$, take the form
\begin{align*}
	&k=1:\ \   \beta \leq \frac{\frac{4}{8}20}{ \frac{4}{8} 10} = 2 \\
	&k=2:\ \   \beta \leq \frac{\frac{5}{8}20}{ \frac{5}{8} 10 + \frac{4}{8} 6 } = \frac{100}{74}\approx 1.35 \\
	&k=3:\ \   \beta \leq \frac{\frac{6}{8}20}{ \frac{6}{8} 10 + \frac{5}{8} 6 + \frac{4}{8} 3 } = \frac{120}{102}\approx 1.17 \\
	&k=4:\ \   \beta \leq \frac{\frac{7}{8}20}{ \frac{7}{8} 10 + \frac{6}{8} 6 + \frac{5}{8} 3+\frac{4}{8} 1 } = \frac{140}{125}= 1.12  \\
	&k=5:\ \   \beta \leq \frac{\frac{8}{8}20 -10}{ \frac{7}{8} 6+ \frac{6}{8} 3 + \frac{5}{8} 1 } = \frac{80}{65}\approx 1.23  \\
	&k=6:\ \   \beta \leq \frac{\frac{8}{8}20-10-6}{ \frac{7}{8} 3 + \frac{6}{8} 1} =\frac{32}{22} \approx 1.45 
\end{align*}
hence $\beta = \frac{140}{125}$ and the quality vector becomes $\mathbf{Q}_{\text{PFO}} = \{ \frac{14}{25}, \frac{7}{10}, \frac{21}{25}, \frac{49}{50}, 1, 1 \}$.
{We note that in this particular case the outcome of the algorithm is maximal and thus, there is no need for re-iteration (cf. note in Sec.~\ref{secPropFairness}).}

\subsubsection*{\textbf{Sum-quality maximisation}}
{Let us continue with the calculation of the quality vector in the case we are interested in maximising the sum quality among the users.
We begin by calculating the value of $Q_{1}$ as follows:}
\begin{align}
	Q_{1} = \min_{w\in[6]} \left\{ \frac{ \alpha_{w} T_{\text{MAN}}\binom{K}{K\gamma}}{ \binom{K}{K\gamma+1} - \binom{K-w}{K\gamma+1}} \right\}
\end{align}
from which we can easily see that the minimum is reached for $w = 2$, which calculates the first quality to be $Q_{1} = \frac{25}{32}$.
{
Calculating the second quality would also yield $Q_{2} = \frac{25}{32}$.}

Similarly, the calculation for $Q_{3}$ becomes
\begin{align}
	Q_{3} = \min_{w\in\{3,4,5, 6\}}  \frac{ \alpha_{w} \binom{K}{K\gamma+1}\! -\! Q_{1} \left[\binom{K\!-\!{w}}{K\gamma}\!+\!\binom{K\!-\!{w}+1}{K\gamma}\right] }{ \binom{K}{K\gamma+1}\! -\! \binom{K-2}{K\gamma+1}} 
\end{align}
from which we can easily see that the minimum is reached at $w_{2} = 3$, which means that $Q_{3} = \frac{51}{64}$.
Similarly, we continue with the calculation of the remaining qualities to arrive at the quality vector $\mathbf{Q}_{\text{m}\Sigma} = \left( \frac{25}{32},  \frac{25}{32}, \frac{51}{64}, \frac{307}{320}, 1, 1 \right)$.

\subsubsection*{\textbf{Delivery algorithm}}
The last part of this example entails the calculation of the power exponents that facilitate the delivery of the files, each with the selected quality, within the target delivery time.
Specifically, let us implement the delivery strategy using the quality vector that is the result of the Sum-Quality maximisation.
The first step is to calculate values $L_{k}(\mathbf{Q})$ and the corresponding product $L_{k}(\mathbf{Q})/\alpha_{k}$.
\begin{align}
	L (\mathbf{Q}) &= \left\{ \frac{125}{16}, \frac{25}{2}, 15, \frac{35}{2}, \frac{6057}{320}, \frac{787}{40} \right\} \\
	\ell (\mathbf{Q}) &= \left\{ \frac{125}{16}, \frac{175}{16}, \frac{5}{2}, \frac{5}{2}, \frac{457}{320}, \frac{239}{320} \right\} \\
	&\arg\max_{w\in[6]}\left\{ \frac{L_{w}(\mathbf{Q})}{\alpha_{w}} \right\} = 4.
\end{align}

Hence, the power exponents are calculated to be
\begin{align}
	\pi = \left\{	\frac{25}{64}, \frac{5}{8}, \frac{6}{8}, \frac{7}{8}, \frac{787}{800}			\right\}.
\end{align}

{The comparison between the quality achieved by each algorithm for this particular example is depicted in Fig.~\ref{figBoostVusersMultiRate}.}

{

\section{Closing remarks}

In this work we treated the problem of adaptive quality for cache-aided degraded broadcast channels.
The first outcome of this work showed how superposition coding, cache-aided multicasting and file quality adaptation can be synergistically combined to allow for a reduced communication load.
Then, we proposed a communication framework that receives as input the users' channel rates along with an (arbitrary) target file-quality at each user and provides a way to efficiently communicate a single file with different quality to each user.
Further, we provided three algorithms that optimise the file quality that needs to be served at each user based on various metrics of interest such as proportional fairness or sum-quality maximisation.

We conclude the paper with two final comments. 

\begin{enumerate}
	\item While we focused on the delivery time of a single request, it is clear that in current streaming applications a typically quite large content file (e.g. video) is pulled chunk by chunk via sequential requests.
	As already discussed in \cite{maddah2014decentralized,bayatMulticastRountingArXiv2020} one can identify as the ``files’’ of the standard Maddah-Ali and Niesen model (used in this work) the ``chunks’’ of longer content files. 
Then, the subpacketization, placement, and delivery, can be applied to all chunks of all files of the content library. This allows to handle long streaming sessions with coded caching (and in our case, with quality adaptation). 
In this case, under the condition that the delivery time is less than the playback time of the video chunks (e.g., each chunk corresponds to $2-10$ seconds of video playback) this guarantees that the streaming session of all users will not halt (i.e., the playback buffer at all video clients will remain non-empty till the end of the streaming session).  The partition of large content files into smaller chunks, and the synchronous delivery by coded caching and multicast transmission of the chunk-wise requests also allow each user to jump into the system and get out in a completely seamless way (see discussion in  \cite{bayatMulticastRountingArXiv2020}). 
	
	\item For the transmission of a single chunk (considered in this paper), it is reasonable to assume, as we did, that the user rates $\alpha_k$ are constant with time and are known at the base station.
	As usually done in virtually all modern wireless communication protocols, the base station tracks the channel quality of each user (e.g., via periodically estimating the receiver signal strength indicator), such that at any point in time the values $\alpha_k$ are known.  
In the case where the rates $\alpha_k$ change (slowly) with time along a long sequence of chunk requests (i.e., a streaming session), by repeating the 
quality allocation at any significant change of the $\alpha_k$, we can achieve 
quality adaptation. This mechanism is similar to what is done in standard 
DASH, with the fundamental difference that DASH adapts each individual user quality since it assumes unicast transmissions, while here the adaptation takes into account the fact that the underlying delivery mechanism is coded caching with multicast transmissions. 
\end{enumerate}

}

\appendices

\section{Proof of Corollary~\ref{corAlphaGuarantee}}\label{appProofAlphaGuarantee}

First, let us bound the delay for the users of set $[n]$
\begin{align}
	\frac{ L_{n}(\mathbf{Q}=\boldsymbol\alpha)}{\alpha_{n}} &=  \frac{  \alpha_{n}\binom{K-1}{K\gamma} + \dots + \alpha_{1}\binom{K-n}{K\gamma}}{\alpha_{n}}\\
	&\leq  \binom{K-1}{K\gamma} + \dots + \binom{K-n}{K\gamma}\\
	&=\! \binom{K}{K\gamma\!+\!1} \!-\! \binom{K\!-\!n}{K\gamma\!+\!1} \leq T_{\text{MAN}} \binom{K}{K\gamma}
\end{align}
where the last step used the Hockey-stick identity.
Hence,
\begin{align}
	T(\mathbf{Q}=\boldsymbol\alpha)=\max_{n\in[K]} \left\{\frac{L_{n}(\mathbf{Q}=\boldsymbol\alpha)}{\alpha_{n}\binom{K}{K\gamma}}\right\} \leq T_{\text{MAN}}\end{align}
which completes the proof.\qed

\section{Proof of Corollary~\ref{corMaxMinQuality}}\label{appMaxMinQuality}

As we discussed on the model description, each quality becomes progressively higher beginning from $Q_{1}$ due to the fact that user $n+1$ has access to at least as much information as user $n$.
Hence, the minimum quality is associated with user $1$ and which further means that the problem {can be translated to a simple maximisation of $Q_{1}$.}

Now, $Q_{1}$ has to be selected in such a way that is maximized while satisfying $T  = T_{\text{MAN}}$.
In other words, the optimization problem we seek to solve is
\begin{align}
	&\text{maximize } Q_{1} \\ \tag{\text{C}1} \label{eqMaxMinCon1}
	\text{s.t. } & \frac{Q_{n}\binom{K-1}{K\gamma}+ ... + Q_{1}\binom{K-n}{K\gamma}}{\alpha_{n}\binom{K}{K\gamma}} \leq  T_{\text{MAN}}, \ \forall n\in[K] \\
	& Q_{1}\leq Q_{2} \leq ... \leq Q_{K}. \tag{\text{C}2} \label{eqMaxMinCon2}
\end{align}

Given constraint \eqref{eqMaxMinCon2}, we can see that $Q_{1}$ is maximized when $Q_{2} = ... = Q_{n} = Q_{1} \triangleq \hat{Q}$.
Then, simply
\begin{align}\label{eqMinMaxQ}
	\hat{Q} = \min_{w\in[K]} \left\{ \frac{\alpha_{w} \binom{K}{K\gamma+1} }{\binom{K}{K\gamma+1} -\binom{K-w}{K\gamma+1}}\right\}.
\end{align}

To complete the proof, we need to show that $Q_{k} = \max\{ \alpha_{k} , \hat{Q} \}$ is an acceptable solution.
Let us begin from the quality $\hat{Q}$ of users $k\leq w$.
Then, using \eqref{eqMinMaxQ}
\begin{align}
	\hat{Q} & \leq \frac{\alpha_{n} \binom{K}{K\gamma+1}}{\binom{K}{K\gamma+1} -\binom{K-n}{K\gamma+1}},\ \forall n\in[K]\\
	\frac{\hat{Q}}{\alpha_{n}} &\leq  \frac{ \binom{K}{K\gamma+1}}{\binom{K}{K\gamma+1} -\binom{K-n}{K\gamma+1}}\leq 1 .
\end{align}

Now, for any $n > w$ we can write 
\begin{align}
	&\frac{ L_{n} ( \hat{Q}, ..., \hat{Q}, \alpha_{w+1}, ... \alpha_{n})}{\alpha_{n}} =\\
	&\!=\!\frac{\alpha_{n}\binom{K-1}{K\gamma}\!+\!... \!+\! \hat{Q}\binom{K-n}{K\gamma}}{\alpha_{n}} \!\leq\! \binom{K\!-\!1}{K\gamma}\!+\!...\!+\! \binom{K\!-\!n}{K\gamma}\\
	&=\!\binom{K}{K\gamma\!+\!1}\! -\! \binom{K-n}{K\gamma\!+\!1} \leq T_{\text{MAN}}\binom{K}{K\gamma}
\end{align}
which completes the proof.\qed

\section{Amount of information at each power layer \& Amount of information for users of set $[n]$}\label{appAmountOfInfoMultiRates}

The first step in characterizing the delay of the system is the calculation of the amount of information that needs to be transmitted at each power layer, which takes the form
	\begin{align}\label{eqTransmittedInfo2}\nonumber
		\ell_{n}(\mathbf{Q}) \!=\! q_{n}\binom{K\!-\!1}{K\gamma} &+ Q_{n\!-\!1}\binom{K\!-\!n}{K\gamma} \!+\\
		+ \! \sum_{i=2}^{n-1}&(Q_{n-1}\!-\!Q_{i-1} )\binom{K\!-\!n\!+\!i\!-\!2}{K\gamma\!-\!1}	.
	\end{align}
and as we can see consists of $3$ parts.
\begin{itemize}
	\item The first part calculates the amount of information required to communicate quality layer $n$.
Since no user before user $n$ has interest in layer $n$ we need to communicate all quality layer $n$ multicast messages intended to user $n$, which are in total $\binom{K-1}{K\gamma}$.
	\item The second part enumerates the multicast messages that are of interest to user $n$ as well as to users $\{n+1, \dots, K\}$. 
Because no layer of any of these multicast messages has been transmitted before, we should transmit all layers $q_{1},...,q_{n-1}$ for these messages i.e., $Q_{n-1}\binom{K-n}{K\gamma}$.
	\item Finally, the third term contains remaining additional quality layers corresponding to multicast messages whose some quality layers have been transmitted in a higher power level.
	To properly calculate these, we begin by making an important observation, that if layer $k$ of a multicast message has been transmitted in a previous power level then all layers $[k\!-\!1]$ of this multicast message would be have been transmitted as well in order for user $k$ to be able to experience quality $Q_{k}$.
	Further, we can see that the second part of \eqref{eqTransmittedInfo2} contains all multicast messages with layer $1$ which have not been transmitted before.
	Using these two observations we can see that in power level $n$ one needs to transmit layers $Q_{n-1}-Q_{1}$, $Q_{n-1}\!-\!Q_{2}$, and so on, for the remaining multicast messages.
	To communicate quality $Q_{n-1}-Q_{1}$ we would need to transmit messages that include both users $1, n$ as well as $K\gamma-1$ users from $\{n+1, \dots, K\}$, which we can easily calculate those to be $\binom{K-n}{K\gamma-1}$.
	Similarly, to communicate quality $Q_{n-1}-Q_{2}$ we would need to transmit messages that include both users $2, n$ as well as $K\gamma-1$ users from $\{1, n+1, \dots, K\}$, which we can easily calculate those to be $\binom{K-n+1}{K\gamma-1}$.
	Continuing in the same manner we arrive at $Q_{n-1}-Q_{n-2}$, which messages would include both users $n-1, n$ as well as $K\gamma-1$ users from $\{1, \dots, n-2, n+1, \dots, K\}$ and which are a total of $\binom{K-3}{K\gamma-1}$ unique messages.
\end{itemize}

With this in place, let us proceed to calculate the total amount of information that needs to be communicated to the first $n$ users i.e., $L_{n}(\mathbf{Q}) = \sum_{i=1}^{n} \ell_{i}(\mathbf{Q})$.

The proof is inductive and based on showing that
\begin{align}\label{eqQgeqAlpha1}
	L_{k}( \mathbf{Q} ) = Q_{k}\binom{K-1}{K\gamma} + \dots + Q_{1}\binom{K-k}{K\gamma}.
\end{align}

Let us begin by verifying that 
\begin{align}
	L_{1} ( \mathbf{Q} )= \ell_{1}( \mathbf{Q} ) = Q_{1}\binom{K-1}{K\gamma}
\end{align}
which of course satisfies \eqref{eqQgeqAlpha1}.
Then, let us assume that \eqref{eqQgeqAlpha1} holds for some $w>1$.

We can then calculate as follows
\begin{align}
	&L_{w+1}( \mathbf{Q} ) = L_{w}( \mathbf{Q} )+\ell_{w+1}( \mathbf{Q} ) \\ \nonumber
	&=Q_{w}\binom{K-1}{K\gamma} + \dots + Q_{1}\binom{K-w}{K\gamma} + \\ \nonumber
	&+ (Q_{w+1}-Q_{w})\binom{K-1}{K\gamma} + Q_{w}\binom{K - (w+1)}{K\gamma} +\\
	&+\! (Q_{w}\!-\!Q_{1})\binom{K\!-\!(w\!+\!1)}{K\gamma-1}\! +\! \dots\! +\!(Q_{w} \!-\! Q_{w-1})\binom{K\!-\!3}{K\gamma\!-\!1}.
\end{align}

Let us first observe that the factor corresponding to $Q_{w+1}$ is equal to $\binom{K-1}{K\gamma}$.

Further, summing the factors corresponding to $Q_{ i\in[w-1]}$ becomes $\binom{K-(w-i+1)}{K\gamma} - \binom{K-(w-i+2)}{K\gamma-1}$ which, using Pascal's triangle becomes $\binom{K-(w-i+2)}{K\gamma}$.

Lastly, for the factors corresponding to $Q_{w}$ we have
\begin{align} \label{eqProofAlphaW1}
	&\binom{K\!-\!1}{K\gamma} \!-\! \binom{K\!-\!1}{K\gamma}\! +\! \binom{K\!-\!(w\!+\!1)}{K\gamma}\!+\! \sum_{i=3}^{w+1}\binom{K\!-\!i}{K\gamma\!\!-\!1} \\ \label{eqProofAlphaW2}
	&=\binom{K-(w+1)}{K\gamma}+ \binom{K-2}{K\gamma}-\binom{K-(w+1)}{K\gamma} \\ \label{eqProofAlphaW3}
	& = \binom{K-2}{K\gamma}
\end{align}
where we transformed the summation in \eqref{eqProofAlphaW1} to a difference of terms in \eqref{eqProofAlphaW2} using the Hockey-stick identity.

We showed that \eqref{eqQgeqAlpha1} holds for $n=1$ and if it holds for some $w>1$ then it also holds for $w+1$.
Hence, by induction it holds for any $n\geq 1$. \qed

\end{document}